\newcommand{\Gb}{{\bar{G}}}
\newcommand{\Fb}{{\bar{F}}}
\newcommand{\la}{\lambda}
\newcommand{\R}{\mathbb{R}}
\newcommand{\lab}{\boldsymbol{\la}}
\newcommand{\pb}{\boldsymbol{p}}
\newcommand{\ub}{\boldsymbol{u}}
\newcommand{\bq}{\begin{equation}}
\newcommand{\eq}{\end{equation}}
\newcommand{\bqs}{\begin{equation*}}
\newcommand{\eqs}{\end{equation*}}
\newcommand{\bqa}{\begin{eqnarray}}
\newcommand{\eqa}{\end{eqnarray}}
\newcommand{\bqas}{\begin{eqnarray*}}
\newcommand{\eqas}{\end{eqnarray*}}
\newcommand{\bc}{\begin{cases}}
\newcommand{\ec}{\end{cases}}
\newcommand{\bt}{\begin{thm}}
\newcommand{\et}{\end{thm}}
\newtheorem{theorem}{Theorem}[section]
\newtheorem{lemma}{Lemma}[section]
\newtheorem{corollary}{Corollary}[section]
\newtheorem{definition}{Definition}[section]
\title{Stochastic comparisons between the extreme claim amounts from two heterogeneous portfolios in the case of transmuted-G model}
\author{ Hossein Nadeb, Hamzeh Torabi, Ali Dolati\\
Department  of Statistics, Yazd University,  Yazd, Iran,\\
}
\date{}
\begin{document}
\maketitle

\begin{abstract}
Let $ X_{\la_1}, \ldots , X_{\la_n}$ be independent non-negative
random variables belong to the transmuted-G model and let
$Y_i=I_{p_i} X_{\la_i}$, $i=1,\ldots,n$, where $I_{p_1}, \ldots,
I_{p_n}$ are independent Bernoulli random variables independent of
$X_{\la_i}$'s, with ${\rm E}[I_{p_i}]=p_i$, $i=1,\ldots,n$. In
actuarial sciences, $Y_i$ corresponds to the claim amount in a
portfolio of risks. In this paper we compare the smallest and the
largest claim amounts of two sets of
independent portfolios belonging to the transmuted-G model, in the
sense of usual stochastic order, hazard rate order and dispersive
order, when the variables in one set have the parameters
$\lambda_1,...,\lambda_n$ and the variables in the other set  have
the parameters $\lambda^{*}_1,...,\lambda^{*}_n$. For illustration
we apply the results to the transmuted-G exponential and the transmuted-G Weibull
models.
\end{abstract}
{\bf Keywords} 
Largest claim amount, Majorization, Smallest claim
amount, Stochastic ordering.\\


\section{Introduction}\label{}
Annual premium is the amount paid by the policyholder as the cost
of the insurance cover being purchased. Indeed, it is the primary
cost to the policyholder for assigning the risk to the insurer
which depends on the type of insurance. Determination of the
annual premium is one of the important problem in insurance
analysis. For this purpose, the smallest and the largest claim
amounts play an important role in providing useful information. An
attractive problem for the actuaries is expressing preferences
between random future gains or losses (Barmalzan et al.
\cite{bar2}). For this purpose, stochastic orderings are very
helpful. Stochastic orderings have been extensively used in some
areas of sciences such as management science, financial economics,
insurance, actuarial science, operation research, reliability
theory, queuing theory and survival analysis. For more details on
stochastic orderings we refer to M\"{u}ller and Stoyan
\cite{must}, Shaked and Shanthikumar \cite{ss} and Li and Li
\cite{lll}. The transmuted-G $ ({\rm TG})$ model, which introduced
by Mirhossaini and Dolati \cite{mir2} and Shaw and Buckley
\cite{shbu}, is an attractive model for constructing new flexible
distributions. Let $F$ be an absolutely continuous distribution
function with the corresponding survival function $\Fb$. The
random variables $X_{\la}$ said to belong to the ${\rm TG}$ model
with the baseline distribution function $F$, if $X_{\la}$ has the
distribution function
\begin{equation*}
F_{X_{\la}}(x)=F(x)\left(1+\la \Fb (x) \right),
\end{equation*}
where $-1\leq \lambda\leq 1$. We use the notion $ X_{\la}
\thicksim {\rm TG}(\la)$ for the transmuted-G model.

Several distributions have been generalized by this transmuting
approach in the literature. Some of them are the transmuted
Weibull distribution by Aryal and Tsokos \cite{arts}, the
transmuted Maxwell distribution by Iriarte and Astorga
\cite{iras}, the transmuted linear exponential distribution by
Tian et al. \cite{tian}, the transmuted log-logistic distribution
by Granzotto and Louzada \cite{gr}, the transmuted Dagum
distribution by Elbatal and Aryal \cite{elb}, the transmuted
Erlang-truncated exponential distribution by Okorie et al.
\cite{ok}, the transmuted exponentiated Weibull geometric
distribution by Saboor et al. \cite{sab}, the transmuted
exponential Pareto distribution by Al-Babtain \cite{alba}, the
transmuted two-parameter Lindley distribution by Kemaloglu and
Yilmaz \cite{keyi} and the transmuted Birnbaum-Saunders
distribution by Bourguignon et al. \cite{bou}.

The problem of stochastic comparisons of some quantities such as
the number of claims, the aggregate claim amounts, the smallest
and the largest claim amounts in two portfolios, have been
considered by many researches in literature; see, e.g., Karlin and
Novikoff \cite{kar}, Ma \cite{ma}, Frostig \cite{fro}, Hu and Ruan
\cite{huru}, Denuit and Frostig \cite{defr}, Khaledi and Ahmadi
\cite{khah}, Zhang and Zhao \cite{zz}, Barmalzan et al.
\cite{bar1}, Li and Li \cite{lili}, Barmalzan and Najafabadi
\cite{bana}, Barmalzan et al. \cite{bar3}, Barmalzan et al.
\cite{bar2} and Balakrishnan et al. \cite{baet}. Flexibility of the transmuted-G model is a good property to assuming this model as the distribution of severities in insurance. Motivated by the
extensive applications of the transmuted-G family to make flexible
models from a given baseline distribution, in this paper we study
stochastic comparisons between the extreme claim amounts from two
heterogeneous portfolios in the case of transmuted-G model. To be
exact, suppose that $X_{\la}$ denotes the total random severities
of a policyholder in an insurance period, and let $I_{p}$ be a
Bernoulli random variable associated with $X_{\la}$, such that
$I_{p}=1$ whenever the policyholder makes random claim amounts
$X_{\la}$ and $I_{p}=0$ whenever does not make a claim. In this
notation, $Y= I_{p} X_{\la}$ is the claim amount in a portfolio of
risks. Consider two sets of heterogeneous portfolios $X_{\la_1},
\ldots , X_{\la_n} $ and $ X_{\la^{*}_1}, \ldots , X_{\la^{*}_n} $
belonging to the TG model and let $Y_i=I_{p_i} X_{\la_i}$ and
$Y^*_i=I_{p^{*}_i} X_{\la^*_i}$, $i=1,\ldots,n$, where $I_{p_i}$
independent of $X_{\lambda_i}$ and $I_{p^{*}_i}$ independent of
$X_{\lambda^{*}_i}$ are independent Bernoulli random variables
with ${\rm E}[I_{p_i}]=p_i$ and ${\rm E}[I_{p^*_i}]=p^*_i$. Let
$Y_{1:n}=\min(Y_1,\ldots,Y_n)$,
$Y^*_{1:n}=\min(Y^*_1,\ldots,Y^*_n)$,
$Y_{n:n}=\max(Y_1,\ldots,Y_n)$ and
$Y^*_{n:n}=\max(Y^*_1,\ldots,Y^*_n)$ be the smallest and the largest claim amounts, arise from $Y_1,\ldots,Y_n$ and
$Y^*_1,\ldots,Y^*_n$. In this paper we compare $Y_{1:n}$ and
$Y^{*}_{1:n}$ in the sense of the usual stochastic order, hazard rate
order and dispersive order and  $Y_{n:n}$ and $Y^{*}_{n:n}$ in the
sense of the usual stochastic order and hazard rate order. For illustration we apply the results to the transmuted-G
exponential and the transmuted-G Weibull models. The rest of the
paper is organized as follows. In Section \ref{sec2}, we recall
some definitions and lemmas which will be used in the sequel. In
Section \ref{sec3}, stochastic comparisons of the largest claim
amounts from two heterogeneous portfolios of risks in a
transmuted-G model in the sense of the usual stochastic ordering
and reversed hazard rate ordering are discussed. In Section
\ref{sec4}, stochastic comparisons of the smallest claim amounts
from two heterogeneous portfolios of risks in a transmuted-G model
in the sense of the usual stochastic ordering, hazard rate
ordering and dispersive ordering are discussed. In Section \ref{sec6} we consider the transmuted-G
exponential and the transmuted-G  Weibull models for illustration
of the established results.

\section{The basic definitions and some prerequisites}\label{sec2}
In this section, we recall some notions of stochastic orderings,
majorization, weakly majorization and related orderings and  some
useful lemmas which are helpful to  prove the main results.
Throughout the paper, we use the notations $ \Bbb R =
(-\infty,+\infty) $, $ \Bbb R_{+} = [0,+\infty) $ and $ \Bbb
R_{++} = (0,+\infty) $. The term increasing (decreasing) is used
for monotone nondecreasing (nonincreasing). Let $X$ and $Y$ be two
non-negative random variables with the respective distribution
functions $ F $ and $ G $, the density functions $ f $ and $ g $,
the survival functions $ \bar{F}=1 - F $ and $ \bar{G}=1 - G  $,
the right continuous inverses $F^{-1} $ and $ G^{-1} $, the hazard
rate functions $ r_{X}={f}/{\bar{F}} $ and $ r_{Y}={g}/{\bar{G}}
$, and the reversed hazard rate functions $
\tilde{r}_{X}={f}/{F} $ and $ \tilde{r}_{Y}={g}/{G} $.
\begin{definition}
{\rm $ X $ is said to be smaller than $ Y $ in the
\begin{itemize}
\item[{\rm(i) }] usual stochastic ordering, denoted by $ X \leq_{\rm st} Y $, if $ \bar{F}(x)\leq\bar{G}(x) $ for all  $ x \in \Bbb R$,

\item[{\rm(ii)}] hazard rate ordering, denoted by $ X\leq_{\rm hr}Y $, if $\Gb(x)/\Fb(x)$ is increasing in $x \in \Bbb R$, or $ r_{Y} (x) \leq r_{X}(x) $ for all $ x \in \Bbb R$,

\item[{\rm(iii)}] reversed hazard rate ordering, denoted by $ X\leq_{\rm rh}Y $, if $G(x)/F(x)$ is increasing in $x \in \Bbb R_{+}$, or $ \tilde{r}_{X} (x) \leq \tilde{r}_{Y}(x) $ for all $ x \in \Bbb R_{+}$,

\item[\rm (iv)] dispersive ordering, denoted by $ X\leq_{\rm disp}Y $, if $  F^{-1}(\beta)-F^{-1}(\alpha)\leq G^{-1}(\beta)-G^{-1}(\alpha)$ for all $ 0\leq\alpha\leq\beta\leq 1 $.
\end{itemize}
}
\end{definition}

We know that the hazard rate and reversed hazard rate orderings imply the usual stochastic ordering.
\begin{lemma}[Shaked and Shanthikumar\cite{ss}, Theorem 3.B.20]\label{3b20}
{\rm Let $X$ and $Y$ be two non-negative random variables. If $X
\leq_{{\rm hr}} Y$ and $X$ or $Y$ is decreasing failure rate
(DFR), then $X \leq_{{\rm disp}} Y$. }
\end{lemma}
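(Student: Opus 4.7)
The plan is to reduce the dispersive ordering to a pointwise inequality involving the hazard rates evaluated at the quantile functions of $X$ and $Y$, and then to exploit the DFR hypothesis to bridge the gap that the hazard rate order alone does not close.

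First I would recall the standard equivalent form: for absolutely continuous $F,G$ with positive densities on their supports, $X\leq_{\rm disp}Y$ is equivalent to $u\mapsto G^{-1}(u)-F^{-1}(u)$ being nondecreasing on $(0,1)$. Differentiating (with $(F^{-1})'(u)=1/f(F^{-1}(u))$ and likewise for $G$) and then multiplying through by $1/(1-u)=1/\bar F(F^{-1}(u))=1/\bar G(G^{-1}(u))$ shows that this is in turn equivalent to
\begin{equation*}
r_X(F^{-1}(u))\geq r_Y(G^{-1}(u)),\qquad u\in(0,1).
\end{equation*}
So the task reduces to verifying this single inequality. Since $X\leq_{\rm hr}Y$ implies $X\leq_{\rm st}Y$ (as noted immediately before the lemma), we have $F^{-1}(u)\leq G^{-1}(u)$ for every $u\in(0,1)$, which is the quantile shift the DFR assumption will have to absorb.

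The two DFR cases then each collapse to a short chain of two inequalities. If $Y$ is DFR, then $r_Y$ is decreasing and the quantile comparison gives $r_Y(G^{-1}(u))\leq r_Y(F^{-1}(u))$; chaining this with $r_Y(F^{-1}(u))\leq r_X(F^{-1}(u))$, which is just $X\leq_{\rm hr}Y$ evaluated at $F^{-1}(u)$, produces the target inequality. If instead $X$ is DFR, I use the mirror argument: $r_X(F^{-1}(u))\geq r_X(G^{-1}(u))$ by the decreasing property of $r_X$, and $r_X(G^{-1}(u))\geq r_Y(G^{-1}(u))$ by $\leq_{\rm hr}$ applied at $G^{-1}(u)$.

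I expect the main obstacle to be the first step: identifying and rigorously justifying the hazard-rate-at-quantile reformulation of $\leq_{\rm disp}$ requires enough regularity (densities positive on the support, so that $F^{-1},G^{-1}$ are differentiable almost everywhere) to legitimize the derivative manipulation. Once that translation is in place, the substance of the lemma is exactly the observation that the DFR hypothesis lets one transport the hazard rate comparison along the rightward shift $F^{-1}(u)\mapsto G^{-1}(u)$ that $\leq_{\rm st}$ merely guarantees exists.
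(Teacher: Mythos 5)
Your argument is correct, but note that the paper itself offers no proof of this statement: it is quoted verbatim as Theorem 3.B.20 of Shaked and Shanthikumar, so there is nothing internal to compare against. What you have written is essentially the standard textbook derivation of that theorem --- the reformulation of $X\leq_{\rm disp}Y$ as $r_X(F^{-1}(u))\geq r_Y(G^{-1}(u))$ for $u\in(0,1)$, the quantile inequality $F^{-1}(u)\leq G^{-1}(u)$ extracted from $X\leq_{\rm hr}Y$ via $X\leq_{\rm st}Y$, and the two-inequality chain in each DFR case --- and both cases check out. The only point needing care is the one you already flag: the differentiation step requires the densities to be positive on the supports, and in full generality one would invoke the known sufficient condition for $\leq_{\rm disp}$ in terms of hazard rates at quantiles rather than re-deriving it by differentiating $G^{-1}-F^{-1}$.
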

For a comprehensive discussion on various stochastic orderings, we
refer to Li and Li \cite{lll} and Shaked and Shanthikumar
\cite{ss}.

We also need the concept of majorization of vectors and matrices
and the Schur-convexity and Schur-concavity of functions. For a
comprehensive discussion of these topics we refer to Marshall et
al. \cite{met}. We use the notation $ x_{(1)}\leq x_{(2)}\leq
...\leq x_{(n)}$ to denote the increasing arrangement of the
components of the vector $ \boldsymbol{x} = (x_{1}, \ldots ,
x_{n})$.
\begin{definition}
{\rm The vector $ \boldsymbol{x} $ is said to be
\begin{itemize}
\item[ (i)] weakly submajorized by the vector $ \boldsymbol{y} $ (denoted by $ \boldsymbol{x}\preceq_{\rm w}\boldsymbol{y} $) if
$\sum_{i=j}^{n}x_{(i)}\leq \sum_{i=j}^{n}y_{(i)}$ for all $j = 1, \ldots , n $,

\item[ (ii)] weakly supermajorized by the vector $ \boldsymbol{y} $ (denoted by $ \boldsymbol{x}\mathop \preceq \limits^{{\mathop{\rm w}} }\boldsymbol{y} $) if $ \sum_{i=1}^{j}x_{(i)}\geq \sum_{i=1}^{j}y_{(i)} $ for all $ j = 1, \ldots , n $,

\item[ (iii)] majorized by the vector $ \boldsymbol{y} $ (denoted by $ \boldsymbol{x}\mathop \preceq \limits^{{\mathop{\rm m}} }\boldsymbol{y} $) if $ \sum_{i=1}^{n}x_{i}= \sum_{i=1}^{n}y_{i}$ and $\sum_{i=1}^{j}x_{(i)}\geq \sum_{i=1}^{j}y_{(i)}$ for all $j = 1, \ldots , n-1 $.
\end{itemize}}
\end{definition}
\begin{definition}
{\rm A real valued function $ \varphi $ defined on a set $ \mathscr{A}\subseteq {\Bbb R}^{n} $ is said to be Schur-convex (Schur-concave) on $ \mathscr{A} $ if

\[
\boldsymbol{x} \mathop \preceq \limits^{{\mathop{\rm m}} }\boldsymbol{y} \quad \text{on}\quad \mathscr{A} \Longrightarrow \varphi(\boldsymbol{x})\leq (\geq)\varphi(\boldsymbol{y}).
\]}
\end{definition}


\begin{lemma}[Marshall et al.\cite{met}, Theorem 3.A.4]\label{3a4}
{\rm Let $ \mathscr{A}\subseteq \Bbb R $ be an open interval and let $l:  \mathscr{A}^n\rightarrow  \Bbb R$ be continuously differentiable. $l$ is Schur-convex (Schur-concave) on $\mathscr{A}^n$ if and only if, $l$ is symmetric on  $\mathscr{A}^n$ and for all $i\neq j$,
\begin{equation*}
(x_i-x_j)\left(\frac{\partial{l(\boldsymbol{x})}}{\partial{x_i}}-\frac{\partial{l(\boldsymbol{x})}}{\partial{x_j}}\right)\geq(\leq)0, \quad \text{for all}\quad \boldsymbol{x}\in \mathscr{A}^n.
\end{equation*}
}
\end{lemma}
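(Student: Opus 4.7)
The plan is to prove both directions of the equivalence, treating necessity by a direct differentiation argument and sufficiency by reduction to the elementary case of a two-coordinate transfer.

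For necessity, assume $l$ is Schur-convex on $\mathscr{A}^n$. Symmetry is automatic: for any permutation $\pi$, the vectors $\boldsymbol{x}$ and $(x_{\pi(1)},\ldots,x_{\pi(n)})$ majorize each other, so Schur-convexity applied in both directions forces $l$ to agree on all permutations of a given argument. To obtain the partial-derivative inequality, fix indices $i\neq j$ and a point $\boldsymbol{x}$ with $x_i>x_j$. For $\varepsilon>0$ small enough that $x_i-\varepsilon > x_j+\varepsilon$, let $\boldsymbol{x}^\varepsilon$ be the vector obtained from $\boldsymbol{x}$ by replacing $x_i$ with $x_i-\varepsilon$ and $x_j$ with $x_j+\varepsilon$. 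A direct check on the partial sums of the decreasing rearrangement shows $\boldsymbol{x}^\varepsilon\preceq^{\mathrm m}\boldsymbol{x}$, so Schur-convexity gives $l(\boldsymbol{x}^\varepsilon)\le l(\boldsymbol{x})$. Dividing by $\varepsilon$ and letting $\varepsilon\to 0^+$ gives $\partial l/\partial x_j - \partial l/\partial x_i \le 0$ at $\boldsymbol{x}$, which rearranges to the claimed sign condition since $x_i-x_j>0$.

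For sufficiency, assume symmetry and the derivative condition and let $\boldsymbol{x}\preceq^{\mathrm m}\boldsymbol{y}$. The decisive tool is the Hardy--Littlewood--P\'olya representation: $\boldsymbol{x}$ can be obtained from $\boldsymbol{y}$ by a finite sequence of \emph{T-transforms} (Robin-Hood transfers between two coordinates that preserve the pair's sum and bring the two values strictly closer together). By induction on the number of transfers, together with symmetry and relabeling, it suffices to prove $l(\boldsymbol{x})\le l(\boldsymbol{y})$ when $\boldsymbol{x}$ and $\boldsymbol{y}$ agree in all but two coordinates $i,j$ and satisfy $y_i\ge x_i\ge x_j\ge y_j$ with $x_i+x_j=y_i+y_j$. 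Introduce the linear path $\boldsymbol{z}(t)=(1-t)\boldsymbol{y}+t\boldsymbol{x}$, $t\in[0,1]$, along which only the $i$-th and $j$-th coordinates vary and $\dot z_j=-\dot z_i=y_i-x_i\ge 0$. Then
\begin{equation*}
\frac{d}{dt}\,l(\boldsymbol{z}(t)) \;=\; \dot z_i\!\left(\frac{\partial l}{\partial x_i}-\frac{\partial l}{\partial x_j}\right)\!\bigg|_{\boldsymbol{z}(t)}.
\end{equation*}
Since $z_i(t)\ge z_j(t)$ throughout the path, the hypothesis yields $\partial l/\partial x_i-\partial l/\partial x_j\ge 0$ at $\boldsymbol{z}(t)$, while $\dot z_i\le 0$, so $l(\boldsymbol{z}(t))$ is non-increasing in $t$ and $l(\boldsymbol{x})\le l(\boldsymbol{y})$.

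The main obstacle is the sufficiency direction, and specifically the appeal to the HLP reduction through T-transforms: the definition of $\preceq^{\mathrm m}$ only constrains partial sums, so there is no direct route from a two-coordinate derivative condition to the comparison of $l$ on arbitrary majorized pairs without this structural reduction. The Schur-concave case is handled by reversing every inequality in the argument above, or equivalently by applying the Schur-convex equivalence to $-l$.
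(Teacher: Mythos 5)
Your proof is correct. Note that the paper does not prove this lemma at all---it is quoted verbatim from Marshall, Olkin and Arnold (Theorem 3.A.4), so there is no in-paper argument to compare against; what you have written is essentially the standard Schur--Ostrowski proof from that reference. Both halves are sound: the necessity direction correctly derives symmetry from the mutual majorization of a vector and its permutations and obtains the derivative inequality from a one-sided difference quotient along a Robin--Hood transfer; the sufficiency direction correctly reduces, via the Hardy--Littlewood--P\'olya $T$-transform decomposition (a legitimate external tool to invoke here), to integrating $\frac{d}{dt}l(\boldsymbol{z}(t))=\dot z_i\left(\partial_i l-\partial_j l\right)$ along the segment, where the sign hypothesis and $\dot z_i\le 0$ give monotonicity. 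The only point worth making explicit is that at a parameter $t$ with $z_i(t)=z_j(t)$ the hypothesis gives no information on $\partial_i l-\partial_j l$; this is harmless because symmetry of $l$ forces $\partial_i l=\partial_j l$ there (and in any case such $t$ can occur only at the endpoint $t=1$ under your normalization $y_i\ge x_i\ge x_j\ge y_j$), so the derivative bound still holds for almost every $t$ and the integration goes through.
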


\begin{lemma}[Marshall et al.\cite{met}, Theorem 3.A.8]\label{mkl}
{\rm For a function $ l $ on $ \mathscr{A}\subseteq \Bbb R^{n} $, $ \boldsymbol{x}\preceq_{\rm w}\boldsymbol{y} $ implies $ l(\boldsymbol{x}) \leq (\geq) l(\boldsymbol{y}) $ if and only if
it is increasing (decreasing) and Schur-convex (Schur-concave) on $ \mathscr{A} $.}
\end{lemma}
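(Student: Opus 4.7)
The plan is to prove the equivalence in both directions, handling only the ``increasing + Schur-convex'' case (the other being symmetric). The central technical device I would invoke is a factorization property for weak submajorization: whenever $\boldsymbol{x} \preceq_{\rm w} \boldsymbol{y}$, there exists an intermediate vector $\boldsymbol{u}$ such that the sorted version of $\boldsymbol{x}$ lies componentwise below $\boldsymbol{u}$ and $\boldsymbol{u} \mathop{\preceq}\limits^{\rm m} \boldsymbol{y}$. With this in hand, each half of the lemma becomes a short deduction.

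For the sufficiency direction, I would assume $l$ is increasing and Schur-convex, fix $\boldsymbol{x} \preceq_{\rm w} \boldsymbol{y}$, apply the factorization to obtain $\boldsymbol{u}$, and then chain
\[
l(\boldsymbol{x}) \;\leq\; l(\boldsymbol{u}) \;\leq\; l(\boldsymbol{y}),
\]
where the first inequality uses monotonicity of $l$ (after noting that $l$ may be taken symmetric without loss of generality, so that the sorted arrangement plays no role) and the second uses Schur-convexity applied to $\boldsymbol{u} \mathop{\preceq}\limits^{\rm m} \boldsymbol{y}$.

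For the necessity direction, I would derive the two required properties of $l$ from the stated implication by identifying two special situations in which $\boldsymbol{x} \preceq_{\rm w} \boldsymbol{y}$ automatically holds. First, if $\boldsymbol{x} \leq \boldsymbol{y}$ componentwise, then the sorted arrangements also satisfy $x_{(i)} \leq y_{(i)}$ for every $i$, which immediately yields $\sum_{i=j}^{n} x_{(i)} \leq \sum_{i=j}^{n} y_{(i)}$ for all $j$; hence $\boldsymbol{x} \preceq_{\rm w} \boldsymbol{y}$ and the hypothesis forces $l(\boldsymbol{x}) \leq l(\boldsymbol{y})$, proving $l$ is increasing. Second, if $\boldsymbol{x} \mathop{\preceq}\limits^{\rm m} \boldsymbol{y}$, then using $\sum_{i=1}^{n} x_{i} = \sum_{i=1}^{n} y_{i}$ and $\sum_{i=1}^{j-1} x_{(i)} \geq \sum_{i=1}^{j-1} y_{(i)}$ I obtain
\[
\sum_{i=j}^{n} x_{(i)} = \sum_{i=1}^{n} x_i - \sum_{i=1}^{j-1} x_{(i)} \;\leq\; \sum_{i=1}^{n} y_i - \sum_{i=1}^{j-1} y_{(i)} = \sum_{i=j}^{n} y_{(i)},
\]
so $\boldsymbol{x} \preceq_{\rm w} \boldsymbol{y}$ and the hypothesis yields Schur-convexity.

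The main obstacle is the factorization lemma at the heart of sufficiency. To establish it, I would let $\delta = \sum_i y_{(i)} - \sum_i x_{(i)} \geq 0$ and construct $\boldsymbol{u}$ by inflating selected coordinates of the sorted $\boldsymbol{x}$ upward by a total of $\delta$, chosen so that (i) each new coordinate stays below the corresponding coordinate of $\boldsymbol{y}$, and (ii) the partial sums of the smallest components of $\boldsymbol{u}$ remain above those of $\boldsymbol{y}$. Showing that these two requirements can be satisfied simultaneously is the delicate part and has to be carried out by walking through the weak-submajorization inequalities in the right order; once this is done, the rest of the lemma is routine.
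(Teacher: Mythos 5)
The paper does not prove this lemma at all: it is imported verbatim from Marshall, Olkin and Arnold (Theorem 3.A.8) and used as a black box, so there is no in-paper argument to compare against. Your outline is the standard textbook proof of that theorem and is correct. The necessity half is fine as written: componentwise domination gives $x_{(i)}\leq y_{(i)}$ for all $i$ (e.g.\ via $x_{(k)}=\min_{|S|=k}\max_{i\in S}x_i$) and hence weak submajorization, which forces monotonicity; and $\boldsymbol{x}\mathop{\preceq}\limits^{\rm m}\boldsymbol{y}\Rightarrow\boldsymbol{x}\preceq_{\rm w}\boldsymbol{y}$ forces Schur-convexity, exactly as you compute. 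For sufficiency, your appeal to symmetry of $l$ is legitimate, since any Schur-convex function is automatically permutation-invariant ($\boldsymbol{x}$ and $\Pi\boldsymbol{x}$ majorize each other). The only substantive burden you leave open is the factorization ``$\boldsymbol{x}\preceq_{\rm w}\boldsymbol{y}$ implies there is $\boldsymbol{u}$ with $\boldsymbol{x}\leq\boldsymbol{u}$ and $\boldsymbol{u}\mathop{\preceq}\limits^{\rm m}\boldsymbol{y}$''; this is itself a named result (Marshall--Olkin--Arnold, Theorem 5.A.9), so you could simply cite it rather than re-derive it, but your greedy construction (repeatedly inflating a coordinate of the sorted $\boldsymbol{x}$ by the minimum slack $\min_j\bigl(\sum_{i=j}^{n}y_{(i)}-\sum_{i=j}^{n}x_{(i)}\bigr)$ until the deficit $\delta$ is exhausted) does go through. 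Two minor points worth a sentence in a polished write-up: the argument implicitly needs $\mathscr{A}$ to contain the intermediate vector $\boldsymbol{u}$ and the test vectors used in the necessity half (the cited theorem is stated for $\mathscr{A}=\Bbb R^{n}$ or suitably closed sets), and the decreasing/Schur-concave version follows by applying the result to $-l$.
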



In the following we recall the concepts of $T$-transform matrix
and chain majorization of matrices. We refer to Marshall et al.
\cite{met} for more details.
 \begin{definition}
{\rm A square matrix is called a
\begin{itemize}
\item[(i)] permutation matrix if each row and each column has a single unit, and all other entries are zero,

\item[(ii)] $T$-transform matrix if it is of the form
$T_{\omega}=\omega I_{n}+(1-\omega)\Pi $, where $ 0\leq\omega
\leq1 $, $ I_{n} $ is an $ n\times n $ identity matrix and $ \Pi $
is a permutation matrix that just interchanges two coordinates.
\end{itemize}

Two $T$-transform matrices said to have the same structure if
their permutation matrices are identical; otherwise they said to
have different structures. }
\end{definition}
In the following definition, we recall a multivariate majorization
notion which will be used in the sequel.

\begin{definition}
{\rm Let $ A = \{a_{ij}\} $ and $ B = \{b_{ij}\} $ be two $ m\times n $ matrices. Then $ A $ is said to be  chain majorized by $ B $, denoted by $ A\ll B $, if there exists a finite set of $ n\times n $ $ T $-transform matrices $ T_{\omega_{1}}, \ldots , T_{\omega_{k}} $ such that $ A = BT_{\omega_{1}}\times \ldots \times T_{\omega_{k}} $.
}
\end{definition}
For $i=1,\ldots,m$ , let $\boldsymbol{a}^{R}_i$ and
$\boldsymbol{b}^{R}_i$, denote the $i$th row of $A$ and $B$,
respectively. Then we have
\begin{table}[H]
\begin{center}
\begin{tabular}{ccccc}
$A\ll B$&$\Rightarrow$&$\boldsymbol{a}^{R}_i\mathop \preceq \limits^{{\mathop{\rm m}} }\boldsymbol{b}^{R}_i$&$\Rightarrow$&$\boldsymbol{a}^{R}_i \preceq_{\rm w} \boldsymbol{b}^{R}_i$\\
&&$\Downarrow$ &&\\
&&$\boldsymbol{a}^{R}_i\mathop \preceq \limits^{{\mathop{\rm w}} }\boldsymbol{b}^{R}_i$&$\Rightarrow$&$\prod \limits_{j=1}^n b_{ij}\leq \prod \limits_{j=1}^n a_{ij} $,
\end{tabular}
\end{center}
\end{table}
where the last consequence holds whenever $\boldsymbol{a}^{R}_i,
\boldsymbol{b}^{R}_i \in \Bbb R^n_{++}$. Let
\[
S_{n} = \bigg\lbrace (\boldsymbol{x},\boldsymbol{y}) = \begin{bmatrix}
x_{1}, \ldots , x_{n} \\
y_{1}, \ldots , y_{n}
\end{bmatrix} :-1\leq x_{i}\leq 1, y_{i}>0, \text{and}\quad ( x_{i}-x_{j})(y_{i}-y_{j})\leq0,\] $\hspace{2cm} i,j = 1,\dots,n\bigg\}.$

We recall the following lemmas similar to the lemmas in Balakrishnan et al.\cite{bee}, which their proofs are very similar to the proofs of lemmas in Balakrishnan et al.\cite{bee}. So, the proofs are omitted for simplicity.

\begin{lemma}
\label{t2}
{\rm A differentiable function $ \varphi : \Bbb R^{4}\longrightarrow \Bbb R_{+} $ satisfies
\begin{equation}\label{222a}
\varphi(A)\leq\varphi(B) \text{ for all } A,B \text{ such that } B\in S_{2}, \text{ and } A\ll B
\end{equation}
if and only if
\begin{itemize}
\item[(i)] $ \varphi(B)=\varphi(B\Pi) $ for all permutation
matrices $ \Pi $, and all $ B\in S_{2} $; and \item[(ii)] $
\sum_{i=1}^{2} (b_{ik}-b_{ij})(\varphi_{ik}(B)-\varphi_{ij}(B))
\geq0$ for all $ j,k = 1,2 $, and all $ B\in S_{2} $, where $
\varphi_{ij}(B) = \dfrac{\partial\varphi(B)}{\partial b_{ij}} $.
\end{itemize}
}
\end{lemma}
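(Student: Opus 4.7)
The plan is to study, for each fixed $B \in S_2$, the one-parameter family
\[
h(\omega) := \varphi(BT_\omega), \qquad \omega \in [0,1],
\]
where $T_\omega = \omega I_2 + (1-\omega)\Pi$ and $\Pi$ is the unique nontrivial $2\times 2$ permutation. The two computational inputs that drive both directions are the column-difference identity
\[
(BT_\omega)_{i1} - (BT_\omega)_{i2} = (2\omega - 1)(b_{i1} - b_{i2})
\]
and, by the chain rule applied to $(BT_\omega)_{ij} = \omega b_{ij} + (1-\omega)b_{i,\sigma(j)}$ with $\sigma$ the transposition,
\[
h'(\omega) = \sum_{i=1}^{2}(b_{i1}-b_{i2})\bigl(\varphi_{i1}(BT_\omega) - \varphi_{i2}(BT_\omega)\bigr).
\]

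For the forward implication, first observe that $B\Pi = BT_0 \ll B$ and, symmetrically, $B = (B\Pi)T_0 \ll B\Pi$, so applying \eqref{222a} twice forces $\varphi(B) = \varphi(B\Pi)$, which is (i). One must verify $B\Pi \in S_2$, but this is immediate since permuting columns preserves all the defining inequalities. For (ii), the hypothesis $BT_\omega \ll B$ for every $\omega \in [0,1]$ tells us $h$ is maximized at the right endpoint $\omega = 1$, hence $h'(1) \geq 0$; reading off $h'(1)$ from the identity above reproduces (ii) in the case $j = 1$, $k = 2$, and the cases $j = k$ and $j = 2, k = 1$ are trivial or identical after relabeling.

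For the reverse implication, the first task is to verify that $BT_\omega \in S_2$ whenever $B \in S_2$ and $\omega \in [0,1]$: convex combinations keep the first-row entries in $[-1,1]$ and the second-row entries positive, and the sign-opposition condition is preserved because the column-difference identity above shows the product of the two row differences gets multiplied by $(2\omega-1)^2 \geq 0$. This closure lets one reduce an arbitrary chain $A = BT_{\omega_1}\cdots T_{\omega_k}$ inductively to a single $T$-transform step, so it suffices to show $h(\omega) \leq \varphi(B)$ for all $\omega \in [0,1]$. Applying hypothesis (ii) at $B' := BT_\omega \in S_2$ with $j=1$, $k=2$ and using the column-difference identity gives
\[
0 \leq \sum_{i=1}^{2}(b'_{i2} - b'_{i1})\bigl(\varphi_{i2}(B') - \varphi_{i1}(B')\bigr) = (2\omega - 1)\, h'(\omega),
\]
so $h$ is nonincreasing on $[0,\tfrac12]$ and nondecreasing on $[\tfrac12, 1]$. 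Its maximum on $[0,1]$ therefore occurs at an endpoint, and (i) supplies $h(0) = \varphi(B\Pi) = \varphi(B) = h(1)$, yielding $h(\omega) \leq \varphi(B)$ as required.

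The one delicate point is the sign-tracking in the reverse direction: one must apply hypothesis (ii) at the transformed matrix $BT_\omega$ (not at $B$ itself) and then combine it with the linear scaling $(2\omega-1)$ to convert the hypothesis into U-shaped monotonicity of $h$ on $[0,1]$. Everything else — the closure of $S_2$ under $T$-transforms, the reduction from a chain to a single step, and the forward direction via boundary maximization and the identity $\Pi = T_0$ — is essentially routine once the family $h(\omega)$ has been isolated.
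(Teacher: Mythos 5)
Your proof is correct. The paper itself omits the proof of this lemma, deferring to Balakrishnan et al.\ (2015), and your argument via the one-parameter family $h(\omega)=\varphi(BT_\omega)$ --- endpoint maximization and $T_0=\Pi$ for necessity, and the sign analysis $(2\omega-1)h'(\omega)\ge 0$ together with closure of $S_2$ under $T$-transforms for sufficiency --- is essentially the standard argument used there.
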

\begin{lemma}
\label{t3} {\rm Let $ \Psi : \Bbb R^{2}\longrightarrow \Bbb
R_{+} $ be a differentiable function, and let the function $
\upsilon_{n} : \Bbb R^{2n}\longrightarrow \Bbb R_{+} $ defined
by
\begin{equation*}
\upsilon_{n}(B) = \prod_{i=1}^{n}\Psi(b_{1i},b_{2i}).
\end{equation*}
If $ \upsilon_{2} $ satisfies \eqref{222a}, then, for $ B\in S_{n}
$, and $ A = BT_{\omega} $, we have $
\upsilon_{n}(A)\leq\upsilon_{n}(B) $. }
\end{lemma}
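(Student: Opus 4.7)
The plan is to reduce the $n$-variable claim to the $n=2$ hypothesis by exploiting the localized nature of a single $T$-transform. Write $T_{\omega} = \omega I_n + (1-\omega)\Pi$, where $\Pi$ interchanges some pair of coordinates $i,j$. Multiplying $B$ on the right by $T_\omega$ leaves every column $k \notin \{i,j\}$ of $B$ unchanged and replaces columns $i$ and $j$ by convex combinations of themselves (with weights $\omega$ and $1-\omega$). Consequently,
\begin{equation*}
\upsilon_n(A) = \Bigl(\prod_{k \neq i,j} \Psi(b_{1k},b_{2k})\Bigr)\cdot \Psi(a_{1i},a_{2i})\,\Psi(a_{1j},a_{2j}),
\end{equation*}
and the analogous factorization holds for $\upsilon_n(B)$ with $b$'s in place of $a$'s in the last two factors. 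Since $\Psi \geq 0$, the shared factor $\prod_{k \neq i,j}\Psi(b_{1k},b_{2k})$ is a nonnegative constant, so it suffices to compare the two-column products.

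Next I would formalize the reduction by letting $B'$ denote the $2 \times 2$ submatrix of $B$ consisting of columns $i$ and $j$ and $A' = B'T'_\omega$, where $T'_\omega = \omega I_2 + (1-\omega)\Pi_2$ is the $2 \times 2$ $T$-transform whose permutation $\Pi_2$ swaps the two coordinates. By definition this gives $A' \ll B'$. One checks that $B'\in S_2$ because $B\in S_n$ already enforces $-1\leq b_{1k}\leq 1$, $b_{2k}>0$, and $(b_{1k}-b_{1l})(b_{2k}-b_{2l})\le 0$ for every pair of indices, in particular for the pair $(i,j)$. Hence the pair $(A',B')$ satisfies the hypotheses of condition \eqref{222a}.

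Applying the hypothesis that $\upsilon_2$ satisfies \eqref{222a} then yields
\begin{equation*}
\Psi(a_{1i},a_{2i})\,\Psi(a_{1j},a_{2j}) = \upsilon_2(A') \leq \upsilon_2(B') = \Psi(b_{1i},b_{2i})\,\Psi(b_{1j},b_{2j}).
\end{equation*}
Multiplying both sides by the common nonnegative factor $\prod_{k\neq i,j}\Psi(b_{1k},b_{2k})$ and invoking the factorization above delivers $\upsilon_n(A)\leq \upsilon_n(B)$, which is the desired conclusion.

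The only step that requires any care is the bookkeeping in the reduction, namely verifying that restricting a $2\times n$ matrix in $S_n$ and an $n\times n$ $T$-transform to the two active coordinates produces exactly an element of $S_2$ together with a genuine $2\times 2$ $T$-transform; everything else is just the multiplicativity of $\upsilon_n$ and the nonnegativity of $\Psi$. This is really the crux of the argument and the reason the lemma is useful: it lifts comparisons on pairs of columns (where $\upsilon_2$ satisfies the chain-majorization inequality) to comparisons of $n$-column products under a single $T$-transform, paving the way for the inductive treatment of general chain majorization, which will be assembled from iterated $T$-transforms in the applications that follow.
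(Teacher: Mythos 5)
Your argument is correct and is exactly the standard reduction used for this lemma: the paper itself omits the proof and defers to Balakrishnan et al.\ (2015), where the same strategy appears --- a single $T$-transform alters only the two columns exchanged by its permutation, so the $n$-column product factors into an untouched nonnegative part times a genuine $2\times 2$ instance of \eqref{222a} applied to the submatrix $B'$ of the two active columns (which inherits membership in $S_2$ from $B\in S_n$). No gaps; the bookkeeping you flag as the crux is handled correctly.
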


\begin{lemma}\label{lem1}
{\rm Let $ k$ be a function defined by
\begin{equation*}
k(x,y,z):=\frac{x}{1-xyz}.
\end{equation*}
Then,
\begin{itemize}
\item[(i)] $ k$ is increasing in $x$,
\item[(ii)] $ k$ is increasing in $y$, when $z\geq 0$.
\end{itemize}
}
\end{lemma}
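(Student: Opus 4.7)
The plan is to prove both assertions by direct differentiation, since $k(x,y,z)=x/(1-xyz)$ is a smooth rational function of its arguments wherever $1-xyz\neq 0$. No clever rewriting or auxiliary result is needed; the two claims follow from reading off the signs of $\partial k/\partial x$ and $\partial k/\partial y$ respectively.

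For part (i) I would apply the quotient rule treating $y,z$ as constants. The numerator will be $(1-xyz)-x\cdot(-yz)$, in which the two $xyz$ contributions cancel to leave $1$, and the denominator is the square $(1-xyz)^2$. Hence $\partial k/\partial x=1/(1-xyz)^2$, which is nonnegative wherever $k$ is defined, yielding monotonicity of $k$ in $x$. For part (ii) I would differentiate with respect to $y$, treating $x,z$ as constants; now the numerator becomes $-x\cdot(-xz)=x^2z$, so $\partial k/\partial y = x^2 z/(1-xyz)^2$. Since $x^2\geq 0$ and the denominator is positive, the sign of this derivative matches the sign of $z$, and the hypothesis $z\geq 0$ gives the stated monotonicity.

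There is essentially no obstacle; the only thing worth a brief mention is that the statement tacitly assumes we stay in a region where $1-xyz\neq 0$, which in the actuarial context of the paper will be automatic because $\lambda$ is confined to $[-1,1]$ and the factors coming from $F(x)$ and $\bar F(x)$ lie in $[0,1]$, keeping the denominator strictly positive. The lemma itself is a short computational building block that will be invoked in the proofs of the main comparison results in Sections~\ref{sec3} and~\ref{sec4}.
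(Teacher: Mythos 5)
Your proof is correct: both partial derivatives are computed accurately, giving $\partial k/\partial x = 1/(1-xyz)^2 \geq 0$ and $\partial k/\partial y = x^2 z/(1-xyz)^2$, which is nonnegative precisely under the hypothesis $z \geq 0$. The paper states this lemma without proof, and your direct differentiation is evidently the intended argument; your remark that the denominator stays positive in the actuarial application (since $p\,\bar F(x)(1-\lambda F(x)) \leq (1-F(x))(1+F(x)) = 1-F^2(x) \leq 1$) is a worthwhile observation the authors leave implicit.
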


\section{Results for the largest claim amounts}\label{sec3}
It is clear that the random variables $Y_i=I_{p_i}X_{\la_i}$,
$i=1,\ldots,n$, are discrete-continuous, which are equal to zero
with the probability $1-p_i$, and $X_{\la_i}$ with the probability
$p_i$, $i=1,\ldots, n$. The distribution function and the reversed
hazard rate function of $Y_{n:n}$, the largest claim amount, are
given by
\begin{equation}\label{largest}
G_{Y_{n:n}}(x)=\prod \limits_{i=1}^n \Bigg(1-p_i \Fb (x) \Big(1-\la_i F(x)\Big)\Bigg),\quad x\geq 0.
\end{equation}
and
\begin{equation}\label{rhlargest}
\tilde{r}_{Y_{n:n}}(x)= \sum \limits_{i=1}^{n} \frac{f(x) \Big(1+\la_i (1-2 F(x))\Big) p_i}{1-p_i\Fb(x)\left(1-\la_i F(x)\right)}{\rm I}_{[x>0]}+{\rm I}_{[x=0]},
\end{equation}
respectively; where ${\rm I}_A$ denotes the indicator function.
Similarly, the distribution function and the reversed hazard rate function of $Y^*_{n:n}$ is the same as in \eqref{largest} and \eqref{rhlargest}
upon replacing $\la_i$ by $\la^*_i$ and $p_i$ by $p^*_i$, $i=1,\ldots,n$, respectively.

The following theorem provides a comparison between the largest
claim amounts in two heterogeneous portfolio of risks, in the sense of
the usual stochastic ordering via matrix majorization.
\begin{theorem}\label{th2}
{\rm Let $ X_{\la_1} , X_{\la_2} $ ($ X_{\la^{*}_1} ,
X_{\la^{*}_2} $) be independent non-negative random variables with
$ X_{\la_i} \thicksim {\rm TG}(\la_i)$ ($ X_{\la^{*}_i} \thicksim
{\rm TG}(\la^*_{i} )$), $i = 1, 2 $. Further, suppose that $I_{p_1},
I_{p_2}$  ($I_{p^*_1},I_{p^*_2} $) is a set of independent
Bernoulli random variables, independent of the $X_{\la_i}$'s
($X_{\la^*_i}$'s), with ${\rm E}[I_{p_i}]=p_i$ (${\rm
E}[I_{p^*_i}]=p^*_i$), $i=1,2$. Let $h:[0,1]\rightarrow I\subset
\R_{+}$ be a differentiable and strictly increasing concave
function on $[0,1]$ with the non-zero derivative. Then for $(\lab,h(\pb))\in
S_{2} $, we have
\begin{eqnarray*}
\begin{bmatrix}
\la^*_{1} & \la^*_{2} \\
h(p^*_{1}) & h(p^*_{2})\\
\end{bmatrix}\ll \begin{bmatrix}
\la_{1} & \la_{2} \\
h(p_{1}) & h(p_{2})\\
\end{bmatrix}\Longrightarrow Y^*_{2:2}\leq_{\rm st}Y_{2:2}.
\end{eqnarray*}
}
\end{theorem}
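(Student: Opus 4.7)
Since $Y^*_{2:2}\le_{\rm st}Y_{2:2}$ amounts to $G_{Y_{2:2}}(x)\le G_{Y^*_{2:2}}(x)$ for every $x\ge 0$, fix $x\ge 0$ and rewrite \eqref{largest} in the product form
\[
G_{Y_{2:2}}(x)=\upsilon_2(B),\qquad \upsilon_2(B):=\prod_{i=1}^{2}\Psi(b_{1i},b_{2i}),\qquad \Psi(u,v):=1-h^{-1}(v)\,\Fb(x)\bigl(1-uF(x)\bigr),
\]
with $B$ the unstarred parameter matrix (first row $(\la_1,\la_2)$, second row $(h(p_1),h(p_2))$); similarly $G_{Y^*_{2:2}}(x)=\upsilon_2(A)$ with $A$ the starred matrix. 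The hypothesis $A\ll B$ with $B\in S_2$ places us in the setting of Lemma~\ref{t2}. Because the desired conclusion is $\upsilon_2(A)\ge\upsilon_2(B)$, I would invoke Lemma~\ref{t2} with the inequality reversed (equivalently, apply it to $1-\upsilon_2$), which flips the sign in condition~(ii): it then remains to check that $\upsilon_2$ is symmetric under column permutations (immediate from the product form) and that
\[
S(B):=\sum_{i=1}^{2}(b_{i1}-b_{i2})\bigl(\varphi_{i1}(B)-\varphi_{i2}(B)\bigr)\le 0
\]
for every $B\in S_2$, where $\varphi_{ij}=\partial\upsilon_2/\partial b_{ij}$.

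First compute $\Psi_u(u,v)=h^{-1}(v)\Fb(x)F(x)\ge 0$ and $\Psi_v(u,v)=-\Fb(x)(1-uF(x))/h'(h^{-1}(v))\le 0$, where the second sign uses $|u|\le 1$ and $h'>0$. Write $\psi_i=\Psi(\la_i,h(p_i))$, each of which is nonnegative as a factor of a CDF, and assume without loss of generality $\la_1\le\la_2$; the $S_2$ condition then forces $p_1\ge p_2$. Applying the product rule yields
\[
\varphi_{11}-\varphi_{12}=\Fb(x)F(x)(p_1\psi_2-p_2\psi_1),\qquad \varphi_{21}-\varphi_{22}=-\Fb(x)\!\left[\frac{(1-\la_1F(x))\psi_2}{h'(p_1)}-\frac{(1-\la_2F(x))\psi_1}{h'(p_2)}\right].
\]
A short algebraic expansion gives $p_1\psi_2-p_2\psi_1=(p_1-p_2)+p_1p_2\Fb(x)F(x)(\la_2-\la_1)\ge 0$, so the $i=1$ summand of $S(B)$, namely $(\la_1-\la_2)\Fb(x)F(x)(p_1\psi_2-p_2\psi_1)$, is nonpositive.

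The main obstacle — and the only point at which the concavity of $h$ is used — is the $i=2$ summand $(h(p_1)-h(p_2))(\varphi_{21}-\varphi_{22})$. Since $h(p_1)-h(p_2)\ge 0$, it suffices to prove that the bracket inside $\varphi_{21}-\varphi_{22}$ is nonnegative, i.e., $(1-\la_1F(x))\psi_2/h'(p_1)\ge (1-\la_2F(x))\psi_1/h'(p_2)$. I would assemble this from three elementary term-by-term bounds: $1-\la_1F(x)\ge 1-\la_2F(x)\ge 0$ (from $\la_1\le\la_2$ and $|\la_i|\le 1$); $\psi_2\ge\psi_1\ge 0$, which follows from $p_1(1-\la_1F(x))\ge p_2(1-\la_2F(x))$ under the $S_2$ orderings; and, crucially, $1/h'(p_1)\ge 1/h'(p_2)$, which is exactly what the concavity hypothesis supplies — $h$ concave with $h'>0$ makes $h'$ nonincreasing, so $p_1\ge p_2$ forces $h'(p_1)\le h'(p_2)$. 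Multiplying these three nonnegativities delivers the bracket inequality; combined with the leading minus sign in $\varphi_{21}-\varphi_{22}$ and with $h(p_1)-h(p_2)\ge 0$, the $i=2$ summand is also $\le 0$. Hence $S(B)\le 0$, and the reversed form of Lemma~\ref{t2} produces $\upsilon_2(A)\ge\upsilon_2(B)$, completing the proof.
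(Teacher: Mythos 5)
Your proposal is correct and follows essentially the same route as the paper: fix $x$, view $G_{Y_{2:2}}(x)$ as a function of the parameter matrix, and apply Lemma~\ref{t2} in its reversed form after verifying permutation symmetry and the nonpositivity of the sum in condition (ii), split into the $\la$-row and $h(p)$-row summands. The only difference is cosmetic — the paper verifies the sign condition via the monotonicity of the auxiliary function $k$ of Lemma~\ref{lem1} together with the convexity of $h^{-1}$, whereas you expand the products explicitly and bound term by term — and your explicit acknowledgment of the sign reversal in Lemma~\ref{t2} is in fact cleaner than the paper's wording.
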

\begin{proof}
In view of \eqref{largest}, the distribution function of $Y_{2:2}$
can be rewritten as
\begin{equation*}
G_{Y_{2:2}}(x)=\prod \limits_{i=1}^2 \Bigg(1-h^{-1}(u_i) \Fb (x) \Big(1-\la_i F(x)\Big)\Bigg),\quad x\geq 0,
\end{equation*}
where $h^{-1}$ is the inverse of the function $h$, and
$u_i=h(p_i)$, $i=1,2$. For fixed $ x\geq0 $, we have to show that
the function $ G_{Y_{2:2}}(x) $ satisfies the conditions of Lemma
\ref{t2}. Clearly,  the condition (i) is satisfied. To check the
condition (ii), consider the function $ \rho $ given by
\begin{equation}\label{rho}
\rho(\lab, \boldsymbol{u}) = \rho_{1}(\boldsymbol{\la}, \boldsymbol{u}) + \rho_{2}(\boldsymbol{\la}, \boldsymbol{u}),
\end{equation}
where
\begin{equation*}
\rho_{1}(\boldsymbol{\la}, \boldsymbol{u}) =(u_{1}-u_{2})\bigg(\dfrac{\partial G_{Y_{2:2}}(x)}{\partial u_{1}}-\dfrac{\partial G_{Y_{2:2}}(x)}{\partial u_{2}} \bigg),
\end{equation*}
and
\begin{equation*}
\rho_{2}(\boldsymbol{\la}, \boldsymbol{u}) = (\la_{1}-\la_{2})\bigg(\dfrac{\partial G_{Y_{2:2}}(x)}{\partial \la_{1}}-\dfrac{\partial G_{Y_{2:2}}(x)}{\partial \la_{2}} \bigg).
\end{equation*}
The partial derivatives of $ G_{Y_{2:2}}(x) $ with respect to $
u_{i} $ and $ \la_{i} $ are given by
\begin{equation*}
\dfrac{\partial G_{Y_{2:2}}(x)}{\partial u_{i}} =- \frac{(1-\la_i F(x))\frac{\partial h^{-1}(u_i)}{\partial u_i}}{1-h^{-1}(u_i)\Fb(x)\left(1-\la_i F(x)\right)} \Fb(x) G_{Y_{2:2}}(x),
\end{equation*}
and
\begin{equation*}
\dfrac{\partial G_{Y_{2:2}}(x)}{\partial \la_{i}}
=\frac{h^{-1}(u_i)}{1-h^{-1}(u_i)\Fb(x)\left(1-\la_i F(x)\right)}
F(x) \Fb(x) G_{Y_{2:2}}(x).
\end{equation*}
Thus
\begin{eqnarray*}
\rho_{1}(\boldsymbol{\la}, \boldsymbol{u}) = -(u_1-u_2)  \Fb(x) G_{Y_{2:2}}(x)\left (\eta_1( \la_1,u_1)\frac{\partial h^{-1}(u_1)}{\partial u_1}-\eta_1( \la_2,u_2)\frac{\partial h^{-1}(u_2)}{\partial u_2}\right),
\end{eqnarray*}
where, $\eta_1(\la,u)=k\left(1-\la F(x), h^{-1}(u),\Fb(x)\right)$,
and $k$ is the function defined in Lemma \ref{lem1}. The
assumption $ (\boldsymbol{\la},\ub) \in S_{2} $ implies that $
(\la_{1}-\la_{2})(u_{1}-u_{2})\leq0 $ or equivalently, $
\la_{1}\leq\la_{2} $ and $ u_{1}\geq u_{2} $, or $
\la_{1}\geq\la_{2} $ and $u_{1}\leq u_{2} $. We only state the
proof for the case $ \la_{1}\leq\la_{2} $ and $ u_{1}\geq u_{2} $.
The other case is analogously proven. Since $h$ is strictly
increasing and concave then $h^{-1}$ is strictly increasing and
convex. The convexity of $h^{-1}$ implies that
\begin{equation*}
0\leq \frac{\partial h^{-1}(u_2)}{\partial u_2}\leq\frac{\partial
h^{-1}(u_1)}{\partial u_1}.
\end{equation*}
In view of Lemma \ref{lem1} the function $\eta_1$ is decreasing in
$\la$ and increasing in $u$, so that
\begin{equation*}
0\leq \eta_1(\la_2,u_2)\leq \eta_1(\la_1,u_2)\leq
\eta_1(\la_1,u_1),
\end{equation*}
which implies that
\begin{equation}\label{rho1}
\rho_1(\lab,\ub)\leq 0.
\end{equation}
On the other hand,
\begin{eqnarray*}
\rho_{2}(\boldsymbol{\la}, \boldsymbol{u}) = (\la_1-\la_2)  F(x) \Fb(x) G_{Y_{2:2}}(x)\Big (\eta_2(\la_1, u_1)-\eta_2(\la_2, u_2)\Big),
\end{eqnarray*}
where, $\eta_2(\la,u)=k\left( h^{-1}(u),1-\la F(x),\Fb(x)\right)$.
By a similar argument the function $\eta_2$ is decreasing in $\la$
and increasing in $u$ and
\begin{equation*} \eta_2(\la_2,u_2)\leq
\eta_2(\la_1,u_2)\leq \eta_2(\la_1,u_1),
\end{equation*}
which implies that
\begin{equation}\label{rho2}
\rho_{2}(\boldsymbol{\la}, \boldsymbol{u})\leq 0.
\end{equation}
By using the inequalities \eqref{rho}, \eqref{rho1} and
\eqref{rho2}, we have that
\begin{equation*}
\rho(\boldsymbol{\la}, \boldsymbol{u})\leq 0,
\end{equation*}
and the function $ G_{Y_{2:2}}(x) $ satisfies the condition (ii)
of Lemma \ref{t2}. Now Lemma \ref{t2} and the condition
$(\lab^*,\ub^*)\ll (\lab,\ub)$ implies that
\begin{equation*}
G_{Y_{2:2}}(x) \leq G_{Y^*_{2:2}}(x),
\end{equation*}
which is the required result.
\end{proof}



The following result provides a lower bound for the survival
function of the largest claim amount based on a heterogeneous
portfolio of risks in terms of the survival function of largest
claim amounts based on a homogeneous portfolio of risks.

\begin{corollary}
{\rm Let  $\bar{\la}=\frac{1}{2}(\la_1+\la_2)$ and
$\overline{h(p)}=\frac{1}{2}(h(p_{1})+h(p_{2}))$. Under the
conditions of Theorem \ref{th2} we have
}
\begin{equation*}
\Gb_{Y_{2:2}}(x)\geq 1-\Bigg(1-h^{-1}(\overline{h(p)}) \Fb (x)
\Big(1-\bar{\la} F(x)\Big)\Bigg)^2.
\end{equation*}

\end{corollary}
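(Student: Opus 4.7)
The plan is to apply Theorem~\ref{th2} with a carefully chosen homogeneous ``starred'' portfolio and then read off the resulting stochastic inequality at the level of survival functions.

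First I would set $\lambda^*_1=\lambda^*_2=\bar\lambda$ and $p^*_1=p^*_2=h^{-1}(\overline{h(p)})$, so that $h(p^*_1)=h(p^*_2)=\overline{h(p)}$. Writing $\Pi$ for the $2\times 2$ permutation that swaps coordinates and $T_{1/2}=\tfrac{1}{2}I_2+\tfrac{1}{2}\Pi$, a direct multiplication shows that
\begin{equation*}
\begin{bmatrix}\bar\lambda & \bar\lambda\\ \overline{h(p)} & \overline{h(p)}\end{bmatrix}
=\begin{bmatrix}\lambda_1 & \lambda_2\\ h(p_1) & h(p_2)\end{bmatrix} T_{1/2},
\end{equation*}
so the starred matrix is chain majorized by the original. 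Since the hypotheses of the corollary are exactly those of Theorem~\ref{th2} (in particular $(\boldsymbol{\lambda},h(\boldsymbol{p}))\in S_2$), that theorem applies and yields $Y^*_{2:2}\leq_{\mathrm{st}} Y_{2:2}$, i.e.\ $\bar G_{Y_{2:2}}(x)\geq \bar G_{Y^*_{2:2}}(x)$ for all $x\geq 0$.

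Finally I would compute $G_{Y^*_{2:2}}$ explicitly from \eqref{largest}. Since all parameters in the starred portfolio coincide, the product collapses and
\begin{equation*}
G_{Y^*_{2:2}}(x)=\Bigl(1-h^{-1}(\overline{h(p)})\,\bar F(x)\bigl(1-\bar\lambda F(x)\bigr)\Bigr)^{2}.
\end{equation*}
Taking complements gives exactly the stated lower bound for $\bar G_{Y_{2:2}}(x)$.

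There is no real obstacle here: the verification of the chain majorization is a one-line matrix identity, and the rest is a direct application of Theorem~\ref{th2} combined with the closed form \eqref{largest} for the distribution function of the largest claim amount under a homogeneous portfolio. The only thing to double-check is that the constructed starred portfolio still satisfies $(\boldsymbol{\lambda}^*,h(\boldsymbol{p}^*))\in S_2$, which holds trivially since both rows are constant, making the product $(\lambda^*_i-\lambda^*_j)(h(p^*_i)-h(p^*_j))=0$.
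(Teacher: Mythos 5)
Your proposal is correct and follows essentially the same route as the paper: both verify the chain majorization via the identity $(\bar\lambda,\bar\lambda;\overline{h(p)},\overline{h(p)})=(\lambda_1,\lambda_2;h(p_1),h(p_2))T_{1/2}$ and then invoke Theorem~\ref{th2} to compare with the homogeneous portfolio. Your write-up is in fact slightly more careful than the paper's, since you explicitly check membership in $S_2$ and spell out the collapsed form of $G_{Y^*_{2:2}}$ from \eqref{largest}.
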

\begin{proof}
It is clear that $ \begin{bmatrix}
\bar{\la}& \bar{\la} \\
\overline{h(p)}&\overline{h(p)}\\
\end{bmatrix}
=\begin{bmatrix}
\la_{1}& \la_{2} \\
h(p_{1}) & h(p_{2})\\
\end{bmatrix}T_{0.5}$. Thus we have
$ \begin{bmatrix}
\bar{\la}& \bar{\la} \\
\overline{h(p)}&\overline{h(p)}\\
\end{bmatrix}
\ll \begin{bmatrix}
\la_{1}& \la_{2} \\
h(p_{1}) & h(p_{2})\\
\end{bmatrix}$. Now Theorem \ref{t2} gives the required result.
\end{proof}

The following result generalizes the result of Theorem \ref{t2}
for an arbitrary number of random variables.

\begin{theorem}\label{th3}
{\rm Let $ X_{\la_1}, \ldots , X_{\la_n} $ ($ X_{\la^{*}_1},
\ldots , X_{\la^{*}_n} $) be independent non-negative random
variables with $ X_{\la_i} \thicksim {\rm TG}(\la_i)$ ($
X_{\la^{*}_i} \thicksim {\rm TG}(\la^*_{i} )$), $i = 1, \ldots, n
$. Further, suppose that $I_{p_1}, \ldots, I_{p_n}$  ($I_{p^*_1},
\ldots, I_{p^*_n} $) is a set of independent Bernoulli random
variables, independent of the $X_{\la_i}$'s ($X_{\la^*_i}$'s),
with ${\rm E}[I_{p_i}]=p_i$ (${\rm E}[I_{p^*_i}]=p^*_i$),
$i=1,\ldots,n$. Let $h:[0,1]\rightarrow I\subset \R_{+}$ be a
differentiable and strictly increasing concave function on
$[0,1]$, with non-zero derivative. Then for $(\lab,h(\pb))\in
S_{n} $, we have
\begin{eqnarray*}
\begin{bmatrix}
\la^*_{1}&\ldots & \la^*_{n} \\
h(p^*_{1})&\ldots & h(p^*_{n})\\
\end{bmatrix}= \begin{bmatrix}
\la_{1}&\ldots & \la_{n} \\
h(p_{1})&\ldots & h(p_{n})\\
\end{bmatrix}T_{\omega} \Longrightarrow Y^*_{n:n}\leq_{\rm st}Y_{n:n}.
\end{eqnarray*}
}
\end{theorem}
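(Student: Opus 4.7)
The plan is to reduce the $n$-variate statement to the bivariate Theorem \ref{th2} via Lemma \ref{t3}, which is precisely the machinery designed to lift a chain-majorization inequality from size $n=2$ to arbitrary $n$ when the function under study has a product structure across the columns of the matrix.

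First I would rewrite the distribution function \eqref{largest} in the product form required by Lemma \ref{t3}. Setting $u_i=h(p_i)$ and, for each fixed $x\geq 0$,
$$\Psi(\la,u)=1-h^{-1}(u)\Fb(x)\bigl(1-\la F(x)\bigr),$$
the distribution function becomes $G_{Y_{n:n}}(x)=\prod_{i=1}^{n}\Psi(\la_i,u_i)=\upsilon_n(B)$, where $B$ is the $2\times n$ matrix with rows $\lab$ and $h(\pb)$. The same identity holds for the starred quantities with the matrix $A=BT_\omega$, by hypothesis.

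Second, I would verify that the bivariate factor $\upsilon_2$ satisfies \eqref{222a}. This is exactly what Theorem \ref{th2} delivers, since in its proof the conditions (i) (symmetry, which is automatic from the product form) and (ii) (the sign condition on the auxiliary function $\rho=\rho_1+\rho_2$) of Lemma \ref{t2} were already checked for $G_{Y_{2:2}}(x)=\upsilon_2(\cdot)$ at each fixed $x\geq 0$. Thus no new analytic work is needed at the $n=2$ level; I simply record Theorem \ref{th2} as a verification of the hypothesis of Lemma \ref{t3}.

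Third, I apply Lemma \ref{t3} with this $\Psi$, with $B=(\lab,h(\pb))\in S_n$ and with $A=(\lab^*,h(\pb^*))=BT_\omega$. The lemma then yields
$$G_{Y_{n:n}}(x)\leq G_{Y^{*}_{n:n}}(x),\qquad x\geq 0,$$
equivalently $\bar G_{Y^{*}_{n:n}}(x)\leq \bar G_{Y_{n:n}}(x)$ for all $x\geq 0$, which is the usual stochastic-order inequality $Y^{*}_{n:n}\leq_{\rm st}Y_{n:n}$.

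The main obstacle is essentially bookkeeping: keeping track of the direction of the inequality in \eqref{222a} versus the direction of stochastic order one wishes to conclude, and confirming that the membership $(\lab,h(\pb))\in S_n$ required by Lemma \ref{t3} is inherited from the theorem's hypothesis (which it is, since this is explicitly assumed). The nontrivial analytic steps—the monotonicity bounds on $\Psi$ through the auxiliary functions $\eta_1,\eta_2$ of Lemma \ref{lem1}, together with the convexity of $h^{-1}$—are already done once and for all in the proof of Theorem \ref{th2}, and the combinatorial extension from $2$ to $n$ columns is exactly what Lemma \ref{t3} supplies.
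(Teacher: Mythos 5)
Your proposal is correct and is essentially the paper's own proof, which simply invokes Lemma \ref{t3} together with Theorem \ref{th2}; you have merely made explicit the product decomposition $G_{Y_{n:n}}(x)=\prod_{i=1}^{n}\Psi(\la_i,u_i)$ and the role of Theorem \ref{th2} as the verification of the bivariate hypothesis. The only point worth recording is the sign bookkeeping you already flag: the proof of Theorem \ref{th2} establishes $\rho\leq 0$, i.e.\ the \emph{reversed} form of condition (ii) of Lemma \ref{t2}, so Lemma \ref{t3} is applied in its reversed form to give $\upsilon_n(A)\geq\upsilon_n(B)$, which is exactly the inequality $G_{Y_{n:n}}(x)\leq G_{Y^{*}_{n:n}}(x)$ you state.
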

\begin{proof}
{\rm Using Lemma \ref{t3} and Theorem \ref{th2}, we immediately obtain the required result.}
\end{proof}

According to Balakrishnan et al.\cite{bee}, a finite product of
$T$-transform matrices with the same structure is also a
$T$-transform matrix. Thus the following result is a direct
consequence of Theorem \ref{th3}.

\begin{corollary}\label{th4}
{\rm Under the assumptions of Theorem \ref{th3}, for $(\lab,h(\pb))\in S_{n} $, we have

\begin{eqnarray*}
\begin{bmatrix}
\la^*_{1}&\ldots & \la^*_{n} \\
h(p^*_{1})&\ldots & h(p^*_{n})\\
\end{bmatrix}= \begin{bmatrix}
\la_{1}&\ldots & \la_{n} \\
h(p_{1})&\ldots & h(p_{n})\\
\end{bmatrix}T_{\omega_1} \ldots T_{\omega_m}  \Longrightarrow Y^*_{n:n}\leq_{\rm st}Y_{n:n},
\end{eqnarray*}
where $T_{\omega_i}$, $i=1,\ldots,m$, have the same structure.
}
\end{corollary}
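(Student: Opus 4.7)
The plan is to exploit the structural remark stated immediately before the corollary, namely that a product of $T$-transform matrices sharing the same permutation component is itself a $T$-transform matrix. Once this reduction is in hand, the hypothesis of the corollary collapses to a single $T$-transform and Theorem \ref{th3} applies verbatim.

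First I would recall the form of a $T$-transform: $T_{\omega} = \omega I_n + (1-\omega)\Pi$, where $\Pi$ interchanges two fixed coordinates and $\Pi^2 = I_n$. For two same-structure $T$-transforms $T_{\omega}$ and $T_{\omega'}$ (same $\Pi$), a direct multiplication yields
\begin{equation*}
T_{\omega}T_{\omega'} = \bigl(\omega\omega' + (1-\omega)(1-\omega')\bigr) I_n + \bigl(\omega(1-\omega') + (1-\omega)\omega'\bigr)\Pi,
\end{equation*}
which, since the two scalar coefficients are non-negative and sum to one, is again of the form $\omega'' I_n + (1-\omega'')\Pi$ for some $\omega'' \in [0,1]$. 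An induction on $m$ then shows that $T_{\omega_1}\cdots T_{\omega_m} = T_{\omega}$ for some $\omega \in [0,1]$ whenever all the factors share the structure. This is exactly the fact attributed to Balakrishnan et al.~\cite{bee}.

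Second, substituting this single-$T$-transform representation into the hypothesis reduces the chain majorization assumption to
\begin{equation*}
\begin{bmatrix}
\la^*_{1}&\ldots & \la^*_{n} \\
h(p^*_{1})&\ldots & h(p^*_{n})\\
\end{bmatrix} = \begin{bmatrix}
\la_{1}&\ldots & \la_{n} \\
h(p_{1})&\ldots & h(p_{n})\\
\end{bmatrix}T_{\omega}.
\end{equation*}
Combined with the assumption $(\lab,h(\pb))\in S_n$, Theorem \ref{th3} applies directly and yields $Y^*_{n:n}\leq_{\rm st}Y_{n:n}$, which is the desired conclusion.

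The only real subtlety, and hence the ``obstacle'' to flag, is that the same-structure hypothesis on the $T_{\omega_i}$ is essential: a product of $T$-transforms with differing permutation components need not be a $T$-transform, nor even of the form required by Theorem \ref{th3} in one shot. Under same-structure we may chain the transforms into a single step; without it one would have to iterate Theorem \ref{th3} through a sequence of intermediate matrices and verify that each intermediate pair of rows still lies in $S_n$, which is not guaranteed. This is precisely why the corollary restricts to same-structure products.
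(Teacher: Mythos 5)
Your proof is correct and follows the same route as the paper: the paper simply cites Balakrishnan et al.\ for the fact that a product of same-structure $T$-transform matrices is again a $T$-transform matrix and then invokes Theorem \ref{th3}, whereas you additionally verify that fact by direct multiplication and induction. Your closing remark about why the same-structure hypothesis matters is exactly the point the paper handles separately in Corollary \ref{cor1}.
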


The following corollary provides a result for the case where the
 $T$-transform matrices have different structures.

\begin{corollary}\label{cor1}
{\rm Under the assumptions of Theorem \ref{th3}, for  $(\lab,h(\pb))\in S_{n} $,
and $(\lab,h(\pb))T_{\omega_1},\ldots,T_{\omega_i}\in S_{n}$, for
$i=1,\ldots,m-1$, where $m\geq 2$, we have
\begin{eqnarray*}
\begin{bmatrix}
\la^*_{1}&\ldots & \la^*_{n} \\
h(p^*_{1})&\ldots & h(p^*_{n})\\
\end{bmatrix}= \begin{bmatrix}
\la_{1}&\ldots & \la_{n} \\
h(p_{1})&\ldots & h(p_{n})\\
\end{bmatrix}T_{\omega_1} \ldots T_{\omega_m}  \Longrightarrow Y^*_{n:n}\leq_{\rm st}Y_{n:n},
\end{eqnarray*}
}
\end{corollary}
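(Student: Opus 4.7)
The plan is to iterate Theorem \ref{th3} along the chain of $T$-transforms, using the transitivity of the usual stochastic order. Set $B^{(0)}=\begin{bmatrix}\lab\\h(\pb)\end{bmatrix}$ and define recursively
\begin{equation*}
B^{(i)}=B^{(i-1)}T_{\omega_i},\qquad i=1,\ldots,m,
\end{equation*}
so that $B^{(m)}=\begin{bmatrix}\lab^*\\h(\pb^*)\end{bmatrix}$. Write the entries of $B^{(i)}$ as $(\lab^{(i)},h(\pb)^{(i)})$; since $h$ is a strictly increasing bijection onto its image, the second row of $B^{(i)}$ can unambiguously be interpreted as $h$ applied to a probability vector $\pb^{(i)}\in[0,1]^n$, and we let $Y_{n:n}^{(i)}$ denote the corresponding largest claim amount.

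The hypothesis of the corollary is precisely that $B^{(i-1)}\in S_n$ for every $i=1,\ldots,m$ (this is the given assumption on $B^{(0)}$ and on each $(\lab,h(\pb))T_{\omega_1}\cdots T_{\omega_i}$ with $i\leq m-1$). Hence, for each $i\in\{1,\ldots,m\}$ we may apply Theorem \ref{th3} to the pair $(B^{(i-1)},B^{(i)})$, the single $T$-transform $T_{\omega_i}$ playing the role of $T_\omega$ there, to conclude
\begin{equation*}
Y^{(i)}_{n:n}\leq_{\rm st} Y^{(i-1)}_{n:n}.
\end{equation*}
Chaining these $m$ inequalities and invoking the transitivity of $\leq_{\rm st}$ yields $Y^*_{n:n}=Y^{(m)}_{n:n}\leq_{\rm st}Y^{(0)}_{n:n}=Y_{n:n}$, as required.

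The only genuine subtlety is that Theorem \ref{th3} only requires the \emph{initial} matrix of each elementary step to lie in $S_n$, not the final one; this is why the hypothesis of Corollary \ref{cor1} stops at $i=m-1$ rather than $i=m$. In particular, we do not need $(\lab^*,h(\pb^*))$ itself to satisfy the sign condition of $S_n$. No calculation beyond what is already done in Theorem \ref{th3} is needed, so there is no real obstacle: the proof is essentially a bookkeeping argument that exhibits the product $T_{\omega_1}\cdots T_{\omega_m}$ as a chain of single-step matrix majorizations and applies Theorem \ref{th3} at each link.
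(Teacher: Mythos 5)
Your proof is correct and is exactly the paper's argument: the paper's own proof of Corollary \ref{cor1} is the one-line "Using Theorem \ref{th3} consecutively, the desired result is immediately obtained," and your write-up simply makes the consecutive application and the transitivity of $\leq_{\rm st}$ explicit. The observation about why the membership hypothesis stops at $i=m-1$ is a correct and useful clarification, but it does not change the route.
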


\begin{proof}
{\rm Using Theorem \ref{th3} consecutively, the desired result is immediately obtained.
}
\end{proof}

The following result deals with the comparison of the largest claim amounts in a homogeneous portfolio of risks, in the sense of the reversed hazard rate ordering  via weakly majorization.

\begin{theorem}
{\rm Under the assumptions of
Theorem \ref{th3} with $\la_i=\la^*_i=\la$, for $i=1,\ldots,n$, we have
\begin{eqnarray*}
(h(p^*_1), \ldots, h(p^*_n))  \preceq_{{\rm w}} (h(p_1), \ldots, h(p_n)) \Longrightarrow Y^*_{n:n} \leq_{{\rm rh}}Y_{n:n}.
\end{eqnarray*}
}
\end{theorem}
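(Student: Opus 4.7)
The plan is to work directly with the reversed hazard rate formula in equation \eqref{rhlargest}, which with $\la_i=\la$ factors cleanly as
\begin{equation*}
\tilde{r}_{Y_{n:n}}(x)=f(x)\bigl(1+\la(1-2F(x))\bigr)\sum_{i=1}^{n}\frac{p_i}{1-p_i\Fb(x)(1-\la F(x))}
\end{equation*}
for $x>0$, and identically for the starred system with $p_i$ replaced by $p_i^{*}$. The common prefactor $f(x)(1+\la(1-2F(x)))$ is nonnegative (it is the TG density ratio), so establishing the reversed hazard rate order reduces to showing, for each fixed $x\geq 0$,
\begin{equation*}
\Psi(\ub^{*}) \leq \Psi(\ub), \qquad \Psi(\ub):=\sum_{i=1}^{n}\phi(u_i),\quad \phi(u):=\frac{h^{-1}(u)}{1-h^{-1}(u)\,c},
\end{equation*}
where $u_i=h(p_i)$, $u_i^{*}=h(p_i^{*})$, and $c:=\Fb(x)(1-\la F(x))\geq 0$ (nonnegativity uses $\la\in[-1,1]$).

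First I would confirm that the weak submajorization condition transports from the $h(p_i)$-scale to the $\Psi$-scale via Lemma \ref{mkl} (Marshall et al.\ Theorem 3.A.8): it suffices to show that $\Psi$ is symmetric, increasing, and Schur-convex on the relevant domain. Symmetry is immediate; the remaining two properties reduce, for a separable sum $\sum\phi(u_i)$, to showing that $\phi$ is increasing and convex in $u$. I would establish this in two stages. Writing $g(v):=v/(1-cv)$, a direct computation gives $g'(v)=1/(1-cv)^2>0$ and $g''(v)=2c/(1-cv)^3\geq 0$, so $g$ is increasing and convex. Since $h$ is strictly increasing and concave with nonzero derivative, $h^{-1}$ is strictly increasing and convex on $I$, so $\phi(u)=g(h^{-1}(u))$ is the composition of an increasing convex function with an increasing convex function, hence increasing and convex in $u$.

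Applying Lemma \ref{mkl} with these properties, the hypothesis $(h(p_1^{*}),\dots,h(p_n^{*}))\preceq_{\rm w}(h(p_1),\dots,h(p_n))$ yields $\Psi(\ub^{*})\leq\Psi(\ub)$, and multiplying by the common nonnegative prefactor gives $\tilde{r}_{Y_{n:n}^{*}}(x)\leq\tilde{r}_{Y_{n:n}}(x)$ for all $x>0$. At $x=0$ both reversed hazard rates equal the indicator value, so the inequality holds on all of $\R_{+}$, which is precisely $Y_{n:n}^{*}\leq_{\rm rh}Y_{n:n}$.

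The main obstacle I anticipate is purely bookkeeping: making sure that $c=\Fb(x)(1-\la F(x))$ is nonnegative for every $x$ and every admissible $\la$ (so that the denominators $1-h^{-1}(u)c$ stay positive and the convexity argument for $g$ is valid), and verifying that the composition rule for convexity applies because both $g$ and $h^{-1}$ are monotone in the right direction. Once these monotonicity/sign checks are in place, the Schur-convexity of the separable sum and the invocation of Lemma \ref{mkl} are standard.
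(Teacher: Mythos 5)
Your proposal is correct and follows essentially the same route as the paper: both reduce the claim to showing that, for fixed $x>0$, the reversed hazard rate is increasing and Schur-convex in the variables $u_i=h(p_i)$ and then invoke Lemma \ref{mkl}. The only difference is cosmetic --- you verify Schur-convexity of the separable sum by exhibiting each summand as the composition $g\circ h^{-1}$ of increasing convex functions, whereas the paper checks the Schur--Ostrowski condition of Lemma \ref{3a4} directly; your version makes the final ``the inequality immediately holds'' step of the paper slightly more explicit.
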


\begin{proof}
{\rm According to \eqref{rhlargest}, the reversed hazard rate function of $ Y_{n:n} $ can be rewritten as
\begin{equation*}
\tilde{r}_{Y_{n:n}}(x)= \sum_{i=1}^{n} \frac{f(x) \Big(1+\la (1-2 F(x))\Big) h^{-1}(u_i)}{1-h^{-1}(u_i)\Fb(x)\left(1-\la F(x)\right)}{\rm I}_{[x>0]}+{\rm I}_{[x=0]},
\end{equation*}
where, $u_i=h(p_i)$, $i=1,\ldots,n$. First, consider $x=0$. In
this case, $\tilde{r}_{Y_{n:n}}(0)=\tilde{r}_{Y^*_{n:n}}(0)=1$,
and the desired result is obvious. Now, consider $x>0$. Using
Lemma \ref{mkl},  it is enough to show that the function $
\tilde{r}_{Y_{n:n}}(x)$ is Schur-convex and increasing in $ u_{i}
$'s. The partial derivatives of $ \tilde{r}_{Y_{n:n}}(x) $ with
respect to $ u_{i} $ is given by
\begin{eqnarray*}
\frac{\partial \tilde{r}_{Y_{n:n}}(x)}{\partial {u_i}}=\frac{f(x) \Big(1+\la (1-2 F(x))\Big)\frac{\partial h^{-1}(u_i)}{\partial u_i}}{\bigg(1-h^{-1}(u_i)\Fb(x)\left(1-\la F(x)\right)\bigg)^2}\geq0.
\end{eqnarray*}
Thus, $  \tilde{r}_{Y_{n:n}}(x) $ is increasing in each $ u_{i} $.
To prove the Schur-convexity of $  \tilde{r}_{Y_{n:n}}(x) $, from
Lemma \ref{3a4}, it is enough to show that for $ i\neq j $,
\begin{equation*}
(u_{i}-u_{j})\bigg(\frac{\partial \tilde{r}_{Y_{n:n}}(x)}{\partial u_{i}}-\frac{\partial \tilde{r}_{Y_{n:n}}(x)}{\partial u_{j}} \bigg)\geq 0,
\end{equation*}
that is, for $ i\neq j $,
\begin{eqnarray*}
&&(u_i -u_j)f(x) \Big(1+\la (1-2 F(x))\Big)\\
&&\times \Bigg( \frac{\frac{\partial h^{-1}(u_i)}{\partial u_i}}{\bigg(1-h^{-1}(u_i)\Fb(x)\left(1-\la F(x)\right)\bigg)^2}-\frac{\frac{\partial h^{-1}(u_j)}{\partial u_j}}{\bigg(1-h^{-1}(u_j)\Fb(x)\left(1-\la F(x)\right)\bigg)^2}\Bigg)\geq 0.
\end{eqnarray*}
Since $h$ is increasing and concave, then $h^{-1}$ is increasing
and convex. Thus, the inequality immediately holds.}
\end{proof}

\section{Results for the smallest claim amounts}\label{sec4}
It can be easily seen that the survival function and the hazard rate function of $Y_{1:n}$, the smallest claim amount, are given by
\begin{equation}\label{smallest}
\Gb_{Y_{1:n}}(x)= \Bigg(\prod \limits_{i=1}^n p_i \Bigg) \prod \limits_{i=1}^n \Bigg(\Fb (x) \big(1-\la_i F(x)\big)\Bigg),\quad x\geq 0,
\end{equation}
and
\begin{equation}\label{hrsmallest}
r_{Y_{1:n}}(x)= \sum_{i=1}^{n} \frac{ 1+\la_i (1-2 F(x))}{1-\la_i F(x)} r(x) {\rm I}_{[x>0]}+\left(1- \prod \limits_{i=1}^{n} p_i\right){\rm I}_{[x=0]},
\end{equation}
respectively. Similarly, the survival function and the hazard rate function of $Y^*_{1:n}$ is the same as in \eqref{smallest} and \eqref{hrsmallest} upon replacing $\la_i$ by $\la^*_i$ and $p_i$ by $p^*_i$, $i=1,\ldots,n$, respectively.

The following result deals with the comparison of the smallest claim amounts in a portfolio of risks, in the sense of the usual stochastic ordering  via majorization.

\begin{theorem}\label{th5}
{\rm Let $ X_{\la_1}, \ldots , X_{\la_n} $ ($ X_{\la^{*}_1},
\ldots , X_{\la^{*}_n} $) be independent non-negative random
variables with $ X_{\la_i} \thicksim {\rm TG}(\la_i)$ ($
X_{\la^{*}_i} \thicksim {\rm TG}(\la^*_{i} )$), $i = 1, \ldots, n
$. Further, suppose that $I_{p_1}, \ldots, I_{p_n}$  ($I_{p^*_1},
\ldots, I_{p^*_n} $) is a set of independent Bernoulli random
variables, independent of the $X_{\la_i}$'s ($X_{\la^*_i}$'s),
with ${\rm E}[I_{p_i}]=p_i$ (${\rm E}[I_{p^*_i}]=p^*_i$),
$i=1,\ldots,n$. Then, we have
\begin{eqnarray*}
\prod \limits_{i=1}^n p^*_i \leq \prod \limits_{i=1}^n p_i,~ (\la_1, \ldots, \la_n) \preceq_{{\rm w}} (\la^*_1, \ldots, \la^*_n) \Longrightarrow Y^*_{1:n}\leq_{\rm st}Y_{1:n}.
\end{eqnarray*}
}
\end{theorem}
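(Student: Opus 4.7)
The plan is to prove the usual-stochastic-order statement pointwise, i.e., $\Gb_{Y^*_{1:n}}(x)\leq \Gb_{Y_{1:n}}(x)$ for every $x$. For $x<0$ both sides equal $1$ since every $Y_i$ and $Y^*_i$ is nonnegative, so this range is trivial. For $x\geq 0$ I will use \eqref{smallest}. Writing $t:=F(x)\in[0,1]$ and factoring out the common nonnegative quantity $\Fb(x)^n$, the target inequality reduces to
\begin{equation*}
\Bigg(\prod_{i=1}^n p^*_i\Bigg)\prod_{i=1}^n(1-\la^*_i t)\ \leq\ \Bigg(\prod_{i=1}^n p_i\Bigg)\prod_{i=1}^n(1-\la_i t).
\end{equation*}
The first factor on each side is controlled by the hypothesis $\prod p^*_i\leq \prod p_i$, so the real task is to compare the $\lab$-dependent products.

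To this end, I would define $\psi_t(\lab):=\prod_{i=1}^n(1-\la_i t)$ on $[-1,1]^n$ and verify that it is nonnegative, coordinatewise decreasing, and Schur-concave. Since $\la_i t\leq 1$, each factor is nonnegative, so $\psi_t\geq 0$; moreover $\partial\psi_t/\partial\la_i = -t\prod_{k\neq i}(1-\la_k t)\leq 0$, so $\psi_t$ is coordinatewise decreasing. For Schur-concavity I will invoke Lemma \ref{3a4}: $\psi_t$ is symmetric, and a short calculation gives
\begin{equation*}
(\la_i-\la_j)\bigl(\partial_i\psi_t-\partial_j\psi_t\bigr)\ =\ -t^2(\la_i-\la_j)^2\prod_{k\neq i,j}(1-\la_k t)\ \leq\ 0,
\end{equation*}
which establishes Schur-concavity. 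As a sanity check, one may alternatively observe that $\log\psi_t=\sum_i\log(1-\la_i t)$ is a sum of concave univariate functions (hence Schur-concave) and that the increasing map $\exp$ preserves Schur-concavity.

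With $\psi_t$ nonnegative, decreasing, and Schur-concave, I will apply Lemma \ref{mkl} in its decreasing--Schur-concave direction: the hypothesis $(\la_1,\ldots,\la_n)\preceq_{\rm w}(\la^*_1,\ldots,\la^*_n)$ then gives $\psi_t(\lab)\geq \psi_t(\lab^*)$. Combining this with $\prod_i p^*_i\leq \prod_i p_i$ and the nonnegativity of $\psi_t(\lab^*)$ chains to the displayed inequality above, and multiplying back by $\Fb(x)^n\geq 0$ yields $\Gb_{Y^*_{1:n}}(x)\leq \Gb_{Y_{1:n}}(x)$ for all $x\geq 0$. Hence $Y^*_{1:n}\leq_{\rm st}Y_{1:n}$, as required.

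No step is genuinely hard; the argument is a structural reduction to a Schur-concavity criterion. The one place that asks for care is invoking Lemma \ref{mkl} in the correct pairing: under weak submajorization $\preceq_{\rm w}$ it is \emph{decreasing} Schur-concave functions (not increasing Schur-convex ones) that reverse the inequality, and this direction matches both the negative sign of $\partial_i\psi_t$ and the sign of the Schur-test computation above.
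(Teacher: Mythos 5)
Your proof is correct and follows essentially the same route as the paper: both factor out $\prod_i p_i$ (and the paper keeps $\Fb(x)^n$ inside $\Fb_{X_{1:n}}$ while you pull it out, a cosmetic difference), verify via Lemma \ref{3a4} that the remaining $\lab$-dependent product is coordinatewise decreasing and Schur-concave, and then invoke Lemma \ref{mkl} in the decreasing/Schur-concave direction to conclude from $\lab\preceq_{\rm w}\lab^*$. Your explicit Schur-test computation and the $\log$-concavity sanity check are fine; no gaps.
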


\begin{proof}
{\rm Assume that $\prod \limits_{i=1}^n p^*_i \leq \prod
\limits_{i=1}^n p_i$. Now using \eqref{smallest}, the required
result holds if $X^*_{1:n}\leq_{\rm st}X_{1:n}$, where $X^*_{1:n}$
and $X_{1:n}$ are the smallest order statistics of $(X_{\la^*_1},
\ldots , X_{\la^*_n})$ and $(X_{\la_1}, \ldots , X_{\la_n})$,
respectively. The survival function of $X_{1:n}$ is given by
\begin{equation*}
\Fb_{X_{1:n}}(x)=\prod \limits_{i=1}^n \Bigg(\Fb (x) \big(1-\la_i F(x)\big)\Bigg),\quad x\geq 0.
\end{equation*}
Thus by Lemma \ref{mkl}, it is enough to show that the function $
\Fb_{X_{1:n}}(x)$ is Schur-concave and decreasing in $ \la_{i}
$'s. The partial derivative of $ \Fb_{X_{1:n}}(x) $ with respect
to $ \la_{i} $ is given by
\begin{equation*}
\frac{\partial \Fb_{X_{1:n}}(x)}{\partial \la_i}=-\frac{F(x)\Fb_{X_{1:n}}(x)}{1-\la_i F(x)}\leq 0.
\end{equation*}
Thus $  \Fb_{X_{1:n}}(x)$ is decreasing in each $ \la_{i} $. To
prove the Schur-concavity of $  \Fb_{X_{1:n}}(x)$, from Lemma
\ref{3a4}, it is enough to show that for $ i\neq j $,
\begin{equation*}
(\la_{i}-\la_{j})\bigg(\frac{\partial \Fb_{X_{1:n}}(x)}{\partial \la_{i}}-\frac{\partial \Fb_{X_{1:n}}(x)}{\partial \la_{j}} \bigg)\leq 0,
\end{equation*}
that is, for $ i\neq j $,
\begin{equation*}
-(\la_{i}-\la_{j}) F(x)\Fb_{X_{1:n}}(x)\bigg(\frac{1}{1-\la_i F(x)}-\frac{1}{1-\la_j F(x)} \bigg)\leq 0,
\end{equation*}
which is immediately concluded. }
\end{proof}

The following result provides a lower bound for the survival
function of the smallest claim amount based on a heterogeneous
portfolio of risks in terms of the survival function of smallest
claim amounts based on a homogeneous portfolio of risks.

\begin{corollary}
{\rm Under the assumption of Theorem \ref{th5}
\begin{equation*}
\Gb_{Y_{1:n}}(x)\geq \Bigg(\tilde{p} \Fb (x) \big(1-\tilde{\la}
F(x)\big)\Bigg)^n,
\end{equation*}
where $\prod \limits_{i=1}^n p_i=\tilde{p}^n$ and
$\tilde{\la}=\max  \bigg(\frac{1}{2}(1+\la_1),
\ldots,\frac{1}{2}(1+ \la_n)\bigg)$.}
\end{corollary}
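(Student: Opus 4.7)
The plan is to apply Theorem \ref{th5} to the given heterogeneous portfolio and a carefully chosen homogeneous comparison portfolio. Specifically, I would set $p^*_i = \tilde{p}$ and $\la^*_i = \tilde{\la}$ for every $i = 1, \ldots, n$. Since $-1 \leq \la_i \leq 1$ implies $\frac{1+\la_i}{2} \in [0,1]$, we have $\tilde{\la} \in [0,1]$; and $\tilde{p}$ is the geometric mean of the $p_i \in [0,1]$, so $\tilde{p} \in [0,1]$ as well. Hence the starred portfolio is a legitimate TG--Bernoulli portfolio to which Theorem \ref{th5} can be applied.

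Next I would verify the two hypotheses of Theorem \ref{th5}. The product condition $\prod_{i=1}^n p^*_i \leq \prod_{i=1}^n p_i$ holds with equality by construction, since $\prod_{i=1}^n p^*_i = \tilde{p}^n = \prod_{i=1}^n p_i$. For the weak submajorization $(\la_1, \ldots, \la_n) \preceq_{\rm w} (\tilde{\la}, \ldots, \tilde{\la})$, the key observation is that $\la_i \leq 1$ implies $\la_i \leq \frac{1+\la_i}{2} \leq \tilde{\la}$ for every $i$. Thus each order statistic satisfies $\la_{(i)} \leq \tilde{\la}$, and summing $n-j+1$ such terms gives $\sum_{i=j}^n \la_{(i)} \leq (n-j+1)\tilde{\la} = \sum_{i=j}^n \la^*_{(i)}$ for every $j = 1, \ldots, n$, which is precisely the weak submajorization required.

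Theorem \ref{th5} then yields $Y^*_{1:n} \leq_{\rm st} Y_{1:n}$, i.e.\ $\Gb_{Y^*_{1:n}}(x) \leq \Gb_{Y_{1:n}}(x)$ for all $x \geq 0$. Finally, formula \eqref{smallest} evaluates the homogeneous survival function explicitly as $\Gb_{Y^*_{1:n}}(x) = \tilde{p}^n \bigl(\Fb(x)(1 - \tilde{\la} F(x))\bigr)^n = \bigl(\tilde{p}\,\Fb(x)(1 - \tilde{\la} F(x))\bigr)^n$, which is exactly the claimed lower bound. Since the proof reduces to verifying the two hypotheses of Theorem \ref{th5} via elementary inequalities, there is no substantive obstacle; the only subtlety is recognizing that the choice $\tilde{\la} = \max_i \frac{1+\la_i}{2}$ is precisely what is needed so that each $\la_i$ is dominated by $\tilde{\la}$, and hence that weak submajorization by the constant vector $(\tilde{\la}, \ldots, \tilde{\la})$ holds without any further assumption on the signs or ordering of the $\la_i$'s.
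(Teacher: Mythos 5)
Your proposal is correct and takes essentially the same route as the paper: both apply Theorem \ref{th5} to the homogeneous comparison portfolio with $\la^*_i=\tilde{\la}$ and $p^*_i=\tilde{p}$, using $\la_i\leq\frac{1}{2}(1+\la_i)\leq\tilde{\la}$ to get the weak submajorization and the equality $\prod_{i=1}^n p^*_i=\tilde{p}^n=\prod_{i=1}^n p_i$ for the product condition. Your write-up is in fact slightly more careful than the paper's, which omits the explicit verification of the probability condition and of $\tilde{\la}\in[-1,1]$.
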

\begin{proof}
It is clear that
\begin{equation*}
 (\la_1, \ldots, \la_n) \preceq_{{\rm w}} \bigg(\frac{1}{2}(1+\la_1), \ldots,\frac{1}{2}(1+ \la_n)\bigg) \preceq_{{\rm w}} (\tilde{\la},\ldots,
 \tilde{\la}).
\end{equation*}
These assumptions satisfy the conditions of Theorem \ref{th5},
which implies the result.
\end{proof}

The following result shows that under the same conditions of Theorem \ref{th5}, a stronger result also holds.

\begin{theorem}\label{th6}
{\rm Under the assumptions of Theorem \ref{th5}, we have
\begin{eqnarray*}
\prod \limits_{i=1}^n p^*_i \leq \prod \limits_{i=1}^n p_i,~ (\la_1, \ldots, \la_n) \preceq_{{\rm w}} (\la^*_1, \ldots, \la^*_n) \Longrightarrow Y^*_{1:n}\leq_{\rm hr}Y_{1:n}.
\end{eqnarray*}
}
\end{theorem}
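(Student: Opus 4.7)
The plan is to work directly with the hazard rate function in \eqref{hrsmallest} and show that $r_{Y_{1:n}}(x)\leq r_{Y^*_{1:n}}(x)$ for every $x\geq 0$, which is exactly the definition of $Y^*_{1:n}\leq_{\rm hr}Y_{1:n}$. I would split on the point $x=0$ versus $x>0$, because the expression \eqref{hrsmallest} has two structurally different pieces there.

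For $x=0$, the hazard rate collapses to $1-\prod_{i=1}^n p_i$ (respectively $1-\prod_{i=1}^n p^*_i$), so the hypothesis $\prod p^*_i\leq\prod p_i$ directly yields $r_{Y_{1:n}}(0)\leq r_{Y^*_{1:n}}(0)$. For $x>0$, the $p_i$'s drop out and only the weak submajorization assumption on the $\la$-vectors is available, so the problem reduces to proving that the function
\[
\phi(\lab)=\sum_{i=1}^{n}\psi(\la_i),\qquad \psi(\la):=\frac{1+\la(1-2F(x))}{1-\la F(x)},
\]
satisfies $\phi(\lab)\leq \phi(\lab^*)$ whenever $\lab\preceq_{\rm w}\lab^*$. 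By Lemma \ref{mkl} this is equivalent to showing that $\phi$ is increasing in each $\la_i$ and Schur-convex on $[-1,1]^n$.

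Both of these reduce to one-variable calculus on $\psi$. A direct differentiation gives $\psi'(\la)=(1-F(x))/(1-\la F(x))^2\geq 0$, so $\psi$ is increasing and hence $\phi$ is increasing in every coordinate. A second differentiation gives $\psi''(\la)=2F(x)(1-F(x))/(1-\la F(x))^3\geq 0$ (with $1-\la F(x)>0$ automatic for $\la\in[-1,1]$ and $x>0$ with $F(x)<1$), so $\psi$ is convex. Since $\phi$ is a sum of a single convex function evaluated at each coordinate, the Schur test of Lemma \ref{3a4} is satisfied: for $i\neq j$,
\[
(\la_i-\la_j)\left(\frac{\partial\phi}{\partial\la_i}-\frac{\partial\phi}{\partial\la_j}\right)=(\la_i-\la_j)(\psi'(\la_i)-\psi'(\la_j))\geq 0
\]
by monotonicity of $\psi'$. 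Hence $\phi$ is Schur-convex, Lemma \ref{mkl} applies, and the hazard-rate comparison follows for $x>0$.

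I do not expect serious obstacles: the whole argument is analogous to the proof of Theorem \ref{th5}, but run on the hazard rate rather than the survival function. The one point worth writing carefully is the behaviour at the boundaries $F(x)=1$ (which can only occur on the support's right endpoint, and does not affect the hazard-rate comparison since both sides are defined by the same limiting procedure) and at $x=0$, where the indicator-based piece of \eqref{hrsmallest} has to be treated separately as above. Once both ranges of $x$ are handled, combining the two inequalities gives $r_{Y_{1:n}}\leq r_{Y^*_{1:n}}$ everywhere, which is the required hazard-rate ordering.
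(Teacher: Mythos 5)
Your proposal is correct and follows essentially the same route as the paper's proof: the paper phrases the argument as monotonicity of the ratio $\Gb_{Y_{1:n}}(x)/\Gb_{Y^*_{1:n}}(x)$, checking the jump at $x=0$ via $\prod p_i/\prod p^*_i\geq 1$ and then reducing to the Schur-convexity and coordinatewise monotonicity of $\sum_i\frac{1+\la_i(1-2F(x))}{1-\la_i F(x)}$ together with Lemma \ref{mkl}, which is exactly your computation (the paper's partial derivative $\Fb(x)r(x)/(1-\la_i F(x))^2$ is your $\psi'(\la)r(x)$). Your pointwise hazard-rate formulation, with the atom at $0$ handled separately, is just a repackaging of the same two checks, so there is nothing substantive to add.
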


\begin{proof}
{\rm According to \eqref{smallest}, we have
\begin{equation*}
\frac{\Gb_{Y_{1:n}}(x)}{\Gb_{Y^*_{1:n}}(x)}=\left(\frac{\prod \limits_{i=1}^n p_i}{\prod \limits_{i=1}^n p^*_i}\right)\frac{\Fb_{X_{1:n}}(x)}{\Fb_{X^*_{1:n}}(x)},\quad x\geq 0.
\end{equation*}
}
We have to show that $\frac{\Gb_{Y_{1:n}}(x)}{\Gb_{Y^*_{1:n}}(x)}$ is increasing in $x$, which holds if
\begin{equation}\label{ineq}
1=\lim \limits_{{\mathop{x\rightarrow 0^-}} } \frac{\Gb_{Y_{1:n}}(x)}{\Gb_{Y^*_{1:n}}(x)} \leq  \frac{\Gb_{Y_{1:n}}(0)}{\Gb_{Y^*_{1:n}}(0)}=   \frac{\prod \limits_{i=1}^n p_i}{\prod \limits_{i=1}^n p^*_i},
\end{equation}
and $\frac{\Fb_{X_{1:n}}(x)}{\Fb_{X^*_{1:n}}(x)}$ is increasing in
$x$. Since Inequality \eqref{ineq} holds according to the
assumptions, it is enough to show that  $X^*_{1:n}\leq_{\rm
hr}X_{1:n}$ or equivalently $r_{X_{1:n}}(x)\leq r_{X^*_{1:n}}(x)$,
for $x\geq 0$. The hazard rate function of $X_{1:n}$ is given by
\begin{equation*}
r_{X_{1:n}}(x)=\sum \limits_{i=1}^n \frac{1+\la_i \left(1-2 F(x)\right)}{1-\la_i F(x)} r(x),\quad x\geq0.
\end{equation*}
Thus by Lemma \ref{mkl}, it is enough to show that the function $
r_{X_{1:n}}(x)$ is Schur-convex and increasing in $ \la_{i} $'s.
The partial derivative of $ r_{X_{1:n}}(x) $ with respect to $
\la_{i} $ is given by
\begin{equation*}
\frac{\partial r_{X_{1:n}}(x)}{\partial \la_i}=\frac{\Fb(x) r(x)}{(1-\la_i F(x))^2}\geq 0.
\end{equation*}
Thus $  r_{X_{1:n}}(x)$ is increasing in each $ \la_{i} $. To
prove the Schur-convexity of $  r_{X_{1:n}}(x)$, from Lemma
\ref{3a4}, it is enough to show that for $ i\neq j $,
\begin{equation*}
(\la_{i}-\la_{j})\bigg(\frac{\partial r_{X_{1:n}}(x)}{\partial \la_{i}}-\frac{\partial r_{X_{1:n}}(x)}{\partial \la_{j}} \bigg)\geq 0,
\end{equation*}
that is, for $ i\neq j $,
\begin{equation*}
(\la_{i}-\la_{j}) \Fb(x)r(x)\bigg(\frac{1}{(1-\la_i F(x))^2}-\frac{1}{(1-\la_j F(x))^2} \bigg)\geq 0,
\end{equation*}
where, the inequality is immediately concluded.
\end{proof}

The following result deals with the comparison of the smallest claim amounts in two portfolios of risks, in the sense of the dispersive ordering  via majorization.

\begin{theorem}\label{th7}
{\rm Under the assumptions of Theorem \ref{th5}, if  $F$ is ${\rm DFR}$, $0\leq \la^*_i \leq 1$, $i=1,\ldots,n$, and $f(0)\leq \frac{1-\prod \limits_{i=1}^n p^*_i}{\sum \limits_{i=1}^n (1+\la^*_i)}$, then we have
\begin{eqnarray*}
\prod \limits_{i=1}^n p^*_i \leq \prod \limits_{i=1}^n p_i,~ (\la_1, \ldots, \la_n) \preceq_{{\rm w}} (\la^*_1, \ldots, \la^*_n) \Longrightarrow Y^*_{1:n}\leq_{\rm disp}Y_{1:n}.
\end{eqnarray*}
}
\end{theorem}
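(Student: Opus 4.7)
The strategy is to combine the hazard rate ordering from Theorem \ref{th6} with Lemma \ref{3b20}. Specifically, Theorem \ref{th6} already yields $Y^*_{1:n}\leq_{{\rm hr}}Y_{1:n}$ under the given majorization assumptions, so by Lemma \ref{3b20} it suffices to prove that either $Y^*_{1:n}$ or $Y_{1:n}$ is DFR. Since the extra hypotheses ($F$ is DFR, $0\leq \la^*_i\leq 1$, and the upper bound on $f(0)$) are stated entirely in terms of the starred parameters, the natural target is to show that $Y^*_{1:n}$ is DFR.

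To establish DFR for $Y^*_{1:n}$, I would analyze its hazard rate $r_{Y^*_{1:n}}$ (given by the starred version of \eqref{hrsmallest}) on the two regions $x>0$ and $x=0$ separately, and then glue them together. On $(0,\infty)$, setting $u=F(x)$, a direct computation gives
\begin{equation*}
\frac{d}{du}\left(\frac{1+\la^*_i(1-2u)}{1-\la^*_i u}\right)= \frac{\la^*_i(\la^*_i-1)}{(1-\la^*_i u)^2}\leq 0
\end{equation*}
whenever $0\leq \la^*_i\leq 1$. Since $F$ is increasing, each factor $(1+\la^*_i(1-2F(x)))/(1-\la^*_i F(x))$ is non-negative and decreasing in $x$, and $r(x)$ is also non-negative and decreasing in $x$ because $F$ is DFR. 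Hence every summand of $r_{Y^*_{1:n}}$ is decreasing on $(0,\infty)$, and so is the sum.

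It remains to check monotonicity across the atom at $x=0$. We have $r_{Y^*_{1:n}}(0)=1-\prod_{i=1}^n p^*_i$, while using $F(0)=0$ and $r(0^+)=f(0)$ gives
\begin{equation*}
\lim_{x\downarrow 0}r_{Y^*_{1:n}}(x)= f(0)\sum_{i=1}^n (1+\la^*_i).
\end{equation*}
The required inequality $r_{Y^*_{1:n}}(0)\geq \lim_{x\downarrow 0}r_{Y^*_{1:n}}(x)$ is therefore precisely the assumed bound on $f(0)$. Combining the two pieces shows that $Y^*_{1:n}$ is DFR, and Lemma \ref{3b20} then delivers the desired $Y^*_{1:n}\leq_{{\rm disp}}Y_{1:n}$. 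The only real subtlety is this glueing at the atom at $0$; the monotonicity on $(0,\infty)$ reduces to a single-variable sign check.
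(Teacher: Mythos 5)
Your proposal is correct and follows essentially the same route as the paper: invoke Theorem \ref{th6} for the hazard rate order, show $Y^*_{1:n}$ is DFR by checking monotonicity of $r_{Y^*_{1:n}}$ separately on $(0,\infty)$ and across the atom at $0$ (where the bound on $f(0)$ is exactly what is needed), and conclude via Lemma \ref{3b20}. The only difference is that you verify the decrease of each summand on $(0,\infty)$ by an explicit derivative computation in $u=F(x)$, whereas the paper cites Mirhossaini et al.\ for that step; your version is self-contained and the computation checks out.
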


\begin{proof}
{\rm By Theorem \ref{th6}, we have that $Y^*_{1:n}\leq_{\rm hr}Y_{1:n}$. According to Mirhossaini et al. \cite{mir1}, $F$ is ${\rm DFR}$ and $0\leq \la^*_i \leq 1$, $i=1,\ldots,n$, imply that the first terms of \eqref{hrsmallest} is decreasing in $x>0$. Therefore,
\begin{equation*}
1-\prod \limits_{i=1}^n p^*_i=r_{Y^*_{1:n}}(0)\geq \lim \limits_{{\mathop{x\rightarrow 0^+}} }r_{Y^*_{1:n}}(x)=\sum \limits_{i=1}^n (1+\la^*_i) r(0)=\sum \limits_{i=1}^n (1+\la^*_i) f(0),
\end{equation*}
} implies that $r_{Y^*_{1:n}}(x)$ is decreasing in $x\geq 0$ and
$Y^*_{1:n}$ is ${\rm DFR}$. Thus, Lemma \ref{3b20} completes the
proof.
\end{proof}
Note that under the assumptions of Theorem \ref{th7}, we can conclude that the variance of $Y^*_{1:n}$ is equal or less than the variance of $Y_{1:n} $.







\section{Application}\label{sec6}
In this section, we provide some special cases for illustration of some results of the paper for $n=3$.
\subsection{Transmuted-G exponential distribution}
Suppose that the baseline distribution in transmuted-G model is
exponential distribution with mean $\theta$. Here this
distribution is denoted by ${\rm TE} (\la,\theta)$. For more
details on this distribution, we refer to Mirhossaini and Dolati
\cite{mir2}.

\begin{itemize}
\item  Let $ X_{\la_1}, X_{\la_2} , X_{\la_3} $ ($ X_{\la^{*}_1}, X_{\la^{*}_2} , X_{\la^{*}_3} $)
 be independent random variables with $ X_{\la_i} \thicksim {\rm TE}(\la_i,0.5)$ ($ X_{\la^{*}_i} \thicksim {\rm TE}(\la^*_{i},0.5 )$), $i = 1, 2,3 $.
Further, suppose that $I_{p_1}, I_{p_2}, I_{p_3}$  ($I_{p^*_1},
I_{p^*_2}, I_{p^*_3} $) is a set of independent Bernoulli random
variables, independent of the $X_{\la_i}$'s ($X_{\la^*_i}$'s),
with ${\rm E}[I_{p_i}]=p_i$ (${\rm E}[I_{p^*_i}]=p^*_i$),
$i=1,2,3$. Also, suppose that
$(\la_1,\la_2,\la_3)=(-0.7,0.8,-0.9)$,
$(\la^*_1,\la^*_2,\la^*_3)=(-0.1806,0.0896,-0.7090)$,
$(p_1,p_2,p_3)=(0.4,0.2,0.7)$, and
$(p^*_1,p^*_2,p^*_3)=(0.4345,0.3698,0.4711)$. Take
$h(p)=\log(2+p)$ and the $T$-transform matrices with the different
structures as

$T_{0.9}=\begin{bmatrix}
1&0 & 0 \\
0&0.9 & 0.1\\
0&0.1 & 0.9\\
\end{bmatrix}$, $T_{0.3}=\begin{bmatrix}
0.3&0 & 0.7 \\
0&1 & 0\\
0.7&0 & 0.3\\
\end{bmatrix}$ and $T_{0.6}=\begin{bmatrix}
0.6&0.4 & 0 \\
0.4&0.6 & 0\\
0&0 & 1\\
\end{bmatrix}$.

It can be easily verified that $(\lab,h(\pb))$,
$(\lab,h(\pb))T_{0.9}$ and $(\lab,h(\pb))T_{0.9} T_{0.3}$ are in
$S_3$, and $(\lab^*,h(\pb^*))=(\lab,h(\pb))T_{0.9} T_{0.3}
T_{0.6}$. Thus, Corollary \ref{cor1} implies $Y^*_{3:3}\leq_{\rm
st}Y_{3:3}$.

\item Let $ X_{\la_1}, X_{\la_2} , X_{\la_3} $ ($ X_{\la^{*}_1},
X_{\la^{*}_2} , X_{\la^{*}_3} $) be independent random variables
with $ X_{\la_i} \thicksim {\rm TE}(\la_i,2)$ ($ X_{\la^{*}_i}
\thicksim {\rm TE}(\la^*_{i},2 )$), $i = 1, 2,3 $. Further,
suppose that $I_{p_1}, I_{p_2}, I_{p_3}$  ($I_{p^*_1}, I_{p^*_2},
I_{p^*_3} $) is a set of independent Bernoulli random variables,
independent of the $X_{\la_i}$'s ($X_{\la^*_i}$'s), with ${\rm
E}[I_{p_i}]=p_i$ (${\rm E}[I_{p^*_i}]=p^*_i$), $i=1,2,3$. Also,
suppose that $(\la_1,\la_2,\la_3)=(0.1,0.3,-0.6)$,
$(\la^*_1,\la^*_2,\la^*_3)=(0.5,-0.3,0.1)$,
$(p_1,p_2,p_3)=(0.5,0.3,0.7)$, and
$(p^*_1,p^*_2,p^*_3)=(0.3,0.9,0.1)$. It can be easily verified
that the conditions of Theorem \ref{th5} hold and so we can
conclude that $Y^*_{1:3}\leq_{\rm st}Y_{1:3}$.

Figure 1 (top panels) represents the survival functions
of $Y_{1:3}$, $Y^*_{1:3}$, $Y_{3:3}$ and $Y^*_{3:3}$ 
for the transmuted exponential distribution.
\end{itemize}

\subsection{Transmuted-G Weibull distribution}
Suppose that the baseline distribution in transmuted-G model is
Weibull distribution with shape parameter $\alpha$ and scale
parameter $\beta$. Here this distribution is denoted by ${\rm TW}
(\la,\alpha,\beta)$. For more details on this distribution, we
refer to Aryal and Tsokos  \cite{arts} and Khan et al.
\cite{khan}.

\begin{itemize}
\item  Let $ X_{\la_1}, X_{\la_2} , X_{\la_3} $ ($ X_{\la^{*}_1},
X_{\la^{*}_2} , X_{\la^{*}_3} $) be independent random variables
with $ X_{\la_i} \thicksim {\rm TW}(\la_i,0.3,1.5)$ ($
X_{\la^{*}_i} \thicksim {\rm TW}(\la^*_{i},0.3,1.5 )$), $i = 1,
2,3 $. Further, suppose that $I_{p_1}, I_{p_2}, I_{p_3}$
($I_{p^*_1}, I_{p^*_2}, I_{p^*_3} $) is a set of independent
Bernoulli random variables, independent of the $X_{\la_i}$'s
($X_{\la^*_i}$'s), with ${\rm E}[I_{p_i}]=p_i$ (${\rm
E}[I_{p^*_i}]=p^*_i$), $i=1,2,3$. Also, suppose that
$(\la_1,\la_2,\la_3)=(0.7,0.3,-0.9)$,
$(\la^*_1,\la^*_2,\la^*_3)=(0.1544,-0.5464,0.4920)$,
$(p_1,p_2,p_3)=(0.1,0.4,0.8)$, and
$(p^*_1,p^*_2,p^*_3)=(0.3506,0.6295,0.2124)$. Take
$h(p)=\frac{5p+2}{p+1}$ and the $T$-transform matrices with the
different structures as

$T_{0.1}=\begin{bmatrix}
1&0 & 0 \\
0&0.1 & 0.9\\
0&0.9 & 0.1\\
\end{bmatrix}$, $T_{0.4}=\begin{bmatrix}
0.4&0 & 0.6 \\
0&1 & 0\\
0.6&0 & 0.4\\
\end{bmatrix}$ and $T_{0.8}=\begin{bmatrix}
0.8&0.2 & 0 \\
0.2&0.8 & 0\\
0&0 & 1\\
\end{bmatrix}$.

It can be easily verified that $(\lab,h(\pb))$,
$(\lab,h(\pb))T_{0.1}$ and $(\lab,h(\pb))T_{0.1} T_{0.4}$ are in
$S_3$, and $(\lab^*,h(\pb^*))=(\lab,h(\pb))T_{0.1} T_{0.4}
T_{0.8}$. Thus, Corollary \ref{cor1} implies $Y^*_{3:3}\leq_{\rm
st}Y_{3:3}$.

\item Let $ X_{\la_1}, X_{\la_2} , X_{\la_3} $ ($ X_{\la^{*}_1},
X_{\la^{*}_2} , X_{\la^{*}_3} $) be independent random variables
with $ X_{\la_i} \thicksim {\rm TW}(\la_i,2,0.6)$ ($ X_{\la^{*}_i}
\thicksim {\rm TW}(\la^*_{i},2,0.6 )$), $i = 1, 2,3 $. Further,
suppose that $I_{p_1}, I_{p_2}, I_{p_3}$  ($I_{p^*_1}, I_{p^*_2},
I_{p^*_3} $) is a set of independent Bernoulli random variables,
independent of the $X_{\la_i}$'s ($X_{\la^*_i}$'s), with ${\rm
E}[I_{p_i}]=p_i$ (${\rm E}[I_{p^*_i}]=p^*_i$), $i=1,2,3$. Also,
suppose $(\la_1,\la_2,\la_3)=(0.3,0.7,0.5)$,
$(\la^*_1,\la^*_2,\la^*_3)=(0.8,0.4,0.5)$,
$(p_1,p_2,p_3)=(0.6,0.3,0.2)$, and
$(p^*_1,p^*_2,p^*_3)=(0.4,0.5,0.1)$. It can be easily verified
that the conditions of Theorem \ref{th5} hold. Thus we can
conclude that $Y^*_{1:3}\leq_{\rm st}Y_{1:3}$.

Figure 1 (bottom panels) represents survival
functions of $Y_{1:3}$, $Y^*_{1:3}$, $Y_{3:3}$ and $Y^*_{3:3}$ for the transmuted Weibull distribution.
\end{itemize}

\begin{figure}[tbp]\label{fig1}
\centerline{ \epsfig{file=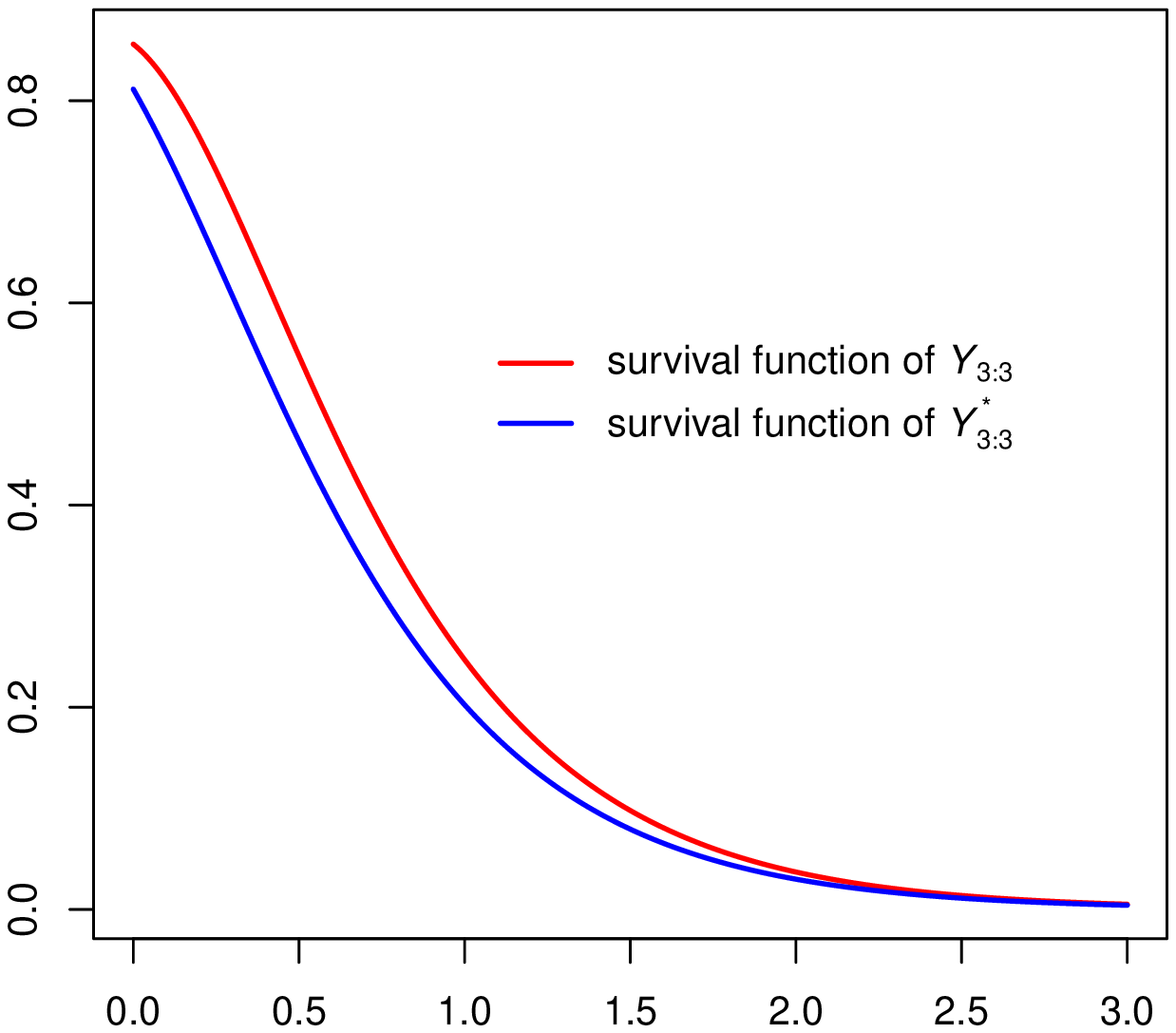,height=2.5in,width=2.5in}
\epsfig{file=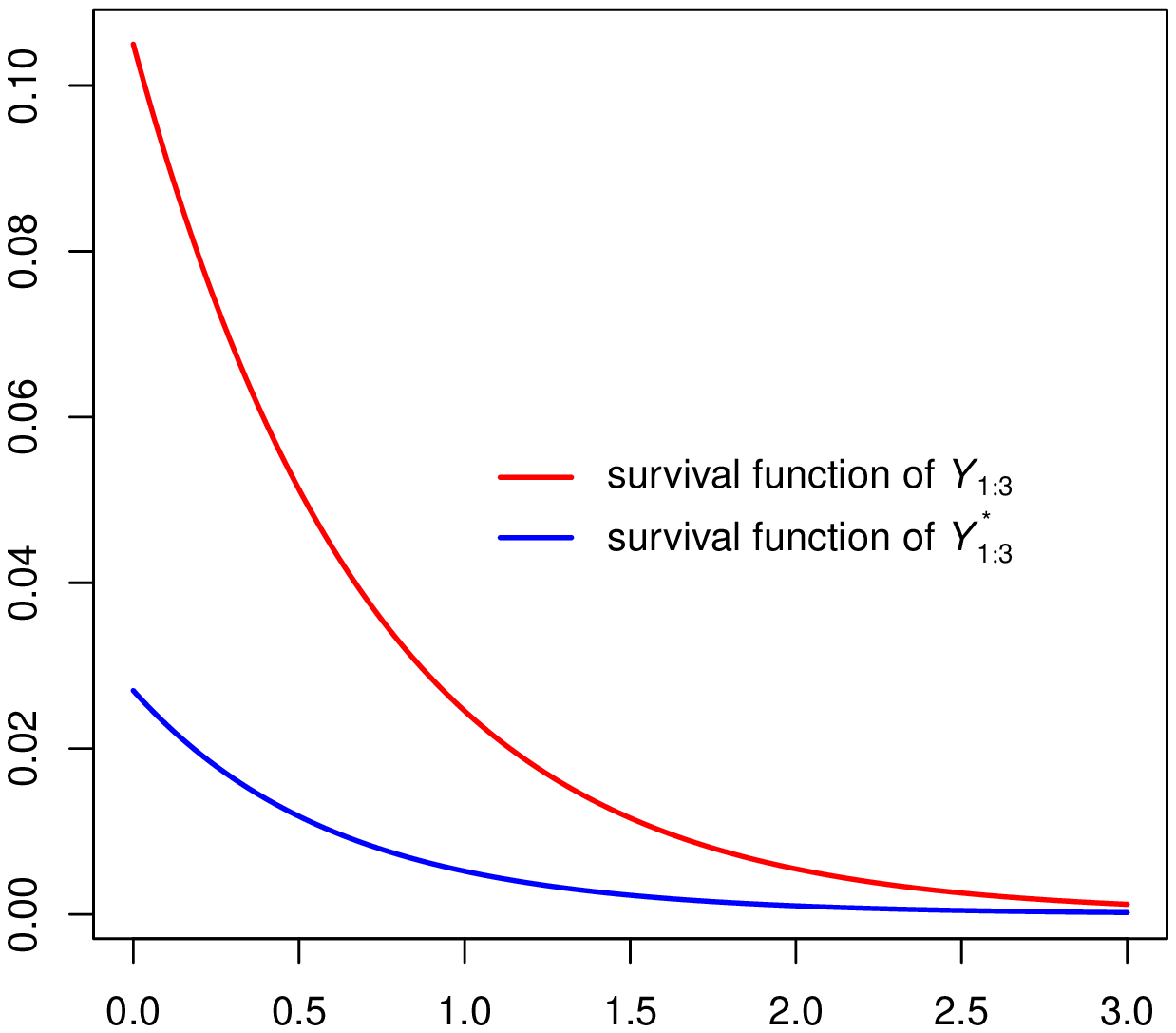,height=2.5in, width=2.5in}}
\centerline{\epsfig{file=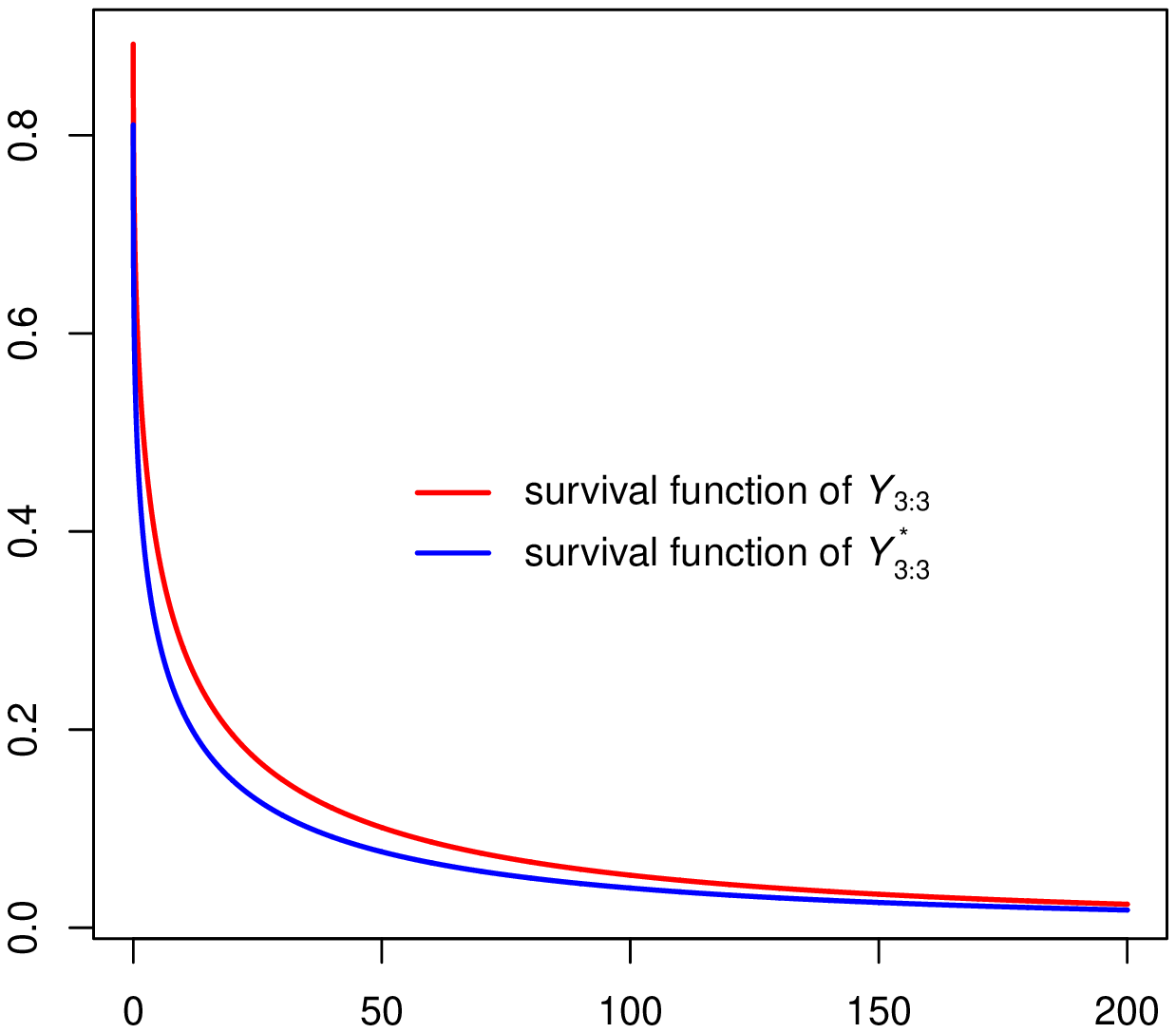,height=2.5in, width=2.5in}
\epsfig{file=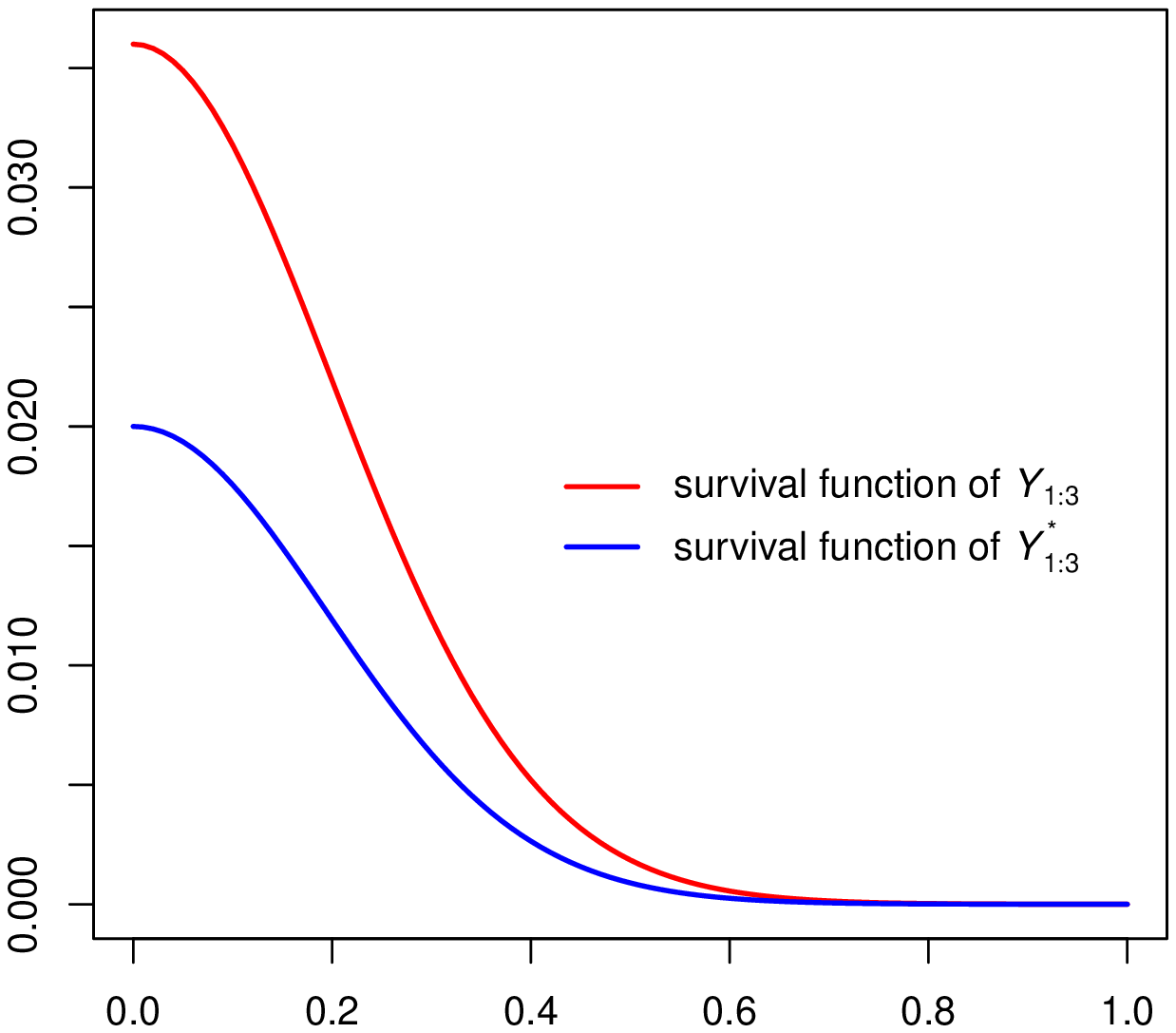,height=2.5in, width=2.5in}}
\par
\par
\vspace{-0.4cm} \caption{\footnotesize{Plots of the survival
functions of $Y_{1:3}$, $Y^*_{1:3}$, $Y_{3:3}$ and $Y^*_{3:3}$ in transmuted exponential distribution (top
panels) and transmuted Weibull distribution (bottom panels).}}
\end{figure}

\section*{Conclusion}
In this paper, under some certain conditions, we discussed stochastic comparisons between the largest claim amounts in the sense of usual
stochastic ordering and reversed hazard rate ordering and stochastic comparisons between the smallest claim amounts in the sense of usual
stochastic ordering, hazard rate ordering and dispersive ordering in transmuted-G model. However, we applied some established results for two special cases of transmuted-G model, such as the transmuted exponential distribution and the transmuted Weibull distribution.
It is very important to mention that the conditions of the most established results do not depend on the baseline distribution properties.





\begin{thebibliography}{9}

\bibitem[(2017)]{alba} Al-Babtain, A. A. (2017). Transmuted Exponential Pareto Distribution with Applications. {\it Journal of Computational and Theoretical Nanoscience},  {\bf 14(11)}, 5484-5490.

\bibitem[(2011)]{arts} Aryal, G.R., and Tsokos, C.P. (2011). Transmuted Weibull distribution: A generalization of the Weibull probability distribution. {\it European Journal of Pure and Applied Mathematics},  {\bf 4(2)}, 89-102.


\bibitem[(2015)]{bee} Balakrishnan, N., Haidari, A. and Masoumifard, K. (2015). Stochastic comparisons of series and parallel systems with generalized exponential components. {\it IEEE Transactions on Reliability},  {\bf 64(1)}, 333-348.

\bibitem[(2018)]{baet} Balakrishnan, N., Zhang, Y., and Zhao, P. (2018). Ordering the largest claim amounts and ranges from two sets of heterogeneous portfolios. {\it Scandinavian Actuarial Journal},  {\bf 2018(1)}, 23-41.

\bibitem[(2015)]{bana} Barmalzan, G. and Najafabadi, A.T.P. (2015). On the convex transform and right-spread orders of smallest claim amounts.  {\it Insurance: Mathematics and Economics},  {\bf 64}, 380-384.

\bibitem[(2015)]{bar1} Barmalzan, G., Najafabadi, A.T.P. and Balakrishnan, N. (2015). Stochastic comparison of aggregate claim amounts between two heterogeneous portfolios  and its applications. {\it Insurance: Mathematics and Economics},  {\bf 61}, 235-241.

\bibitem[(2016)]{bar3} Barmalzan, G., Najafabadi, A.T.P. and Balakrishnan, N. (2016). Likelihood ratio and dispersive orders for smallest order statistics and smallest claim amounts from heterogeneous Weibull sample. {\it Statistics and Probability Letters},  {\bf 110}, 1-7.

\bibitem[(2017)]{bar2} Barmalzan, G., Najafabadi, A.T.P. and Balakrishnan, N. (2017). Ordering properties of the smallest and largest claim amounts in a general scale model. {\it Scandinavian Actuarial Journal},  {\bf 2017(2)}, 105-124.

\bibitem[(2017)]{bou} Bourguignon, M., Le\~{a}o, J., Leiva, V., and Santos-Neto, M. (2017). The transmuted Birnbaum-Saunders distribution. {\it REVSTAT Statistical Journal},  {\bf 15}, 601-628.


\bibitem[(2006)]{defr} Denuit, M. and Frostig, E. (2006). Heterogeneity and the need for capital in the individual model. {\it Scandinavian Actuarial Journal}, {\bf 2006(1)}, 42-66.
%
%

\bibitem[(2015)]{elb} Elbatal, I., and Aryal, G. (2015). Transmuted Dagum distribution with applications. {\it Chilean Journal of Statistics}, {\bf 6(12)}, 31-45.

\bibitem[(2001)]{fro} Frostig, E. (2001). A comparison between homogeneous and heterogeneous portfolios. {\it Insurance: Mathematics and Economics}, {\bf 29(1)}, 59-71.

%

\bibitem[(2015)]{gr} Granzotto, D. C. T., and Louzada, F. (2015). The transmuted log-logistic distribution: Modelling, inference, and an application to a polled tabapua race time up to first calving data.  {\it Communications in Statistics-Theory and Methods}, {\bf 44(16)}, 3387-3402.

\bibitem[(2004)]{huru} Hu, T. and Ruan, L. (2004). A note on multivariate stochastic comparisons of Bernoulli random variables. {\it Journal of Statistical Planning and Inference}, {\bf 126(1)}, 281-288.

\bibitem[(2014)]{iras} Iriarte, Y. A., and Astorga, J. M. (2014). Transmuted Maxwell probability distribution. {\it Revista Integraci\'{o}n}, {\bf 32(2)}, 211-221.

\bibitem[(1963)]{kar} Karlin, S., and Novikoff, A. (1963). Generalized convex inequalities. {\it Pacific Journal of Mathematics}, {\bf 13(4)}, 1251-1279.

\bibitem[(2017)]{keyi} Kemaloglu, S. A., and Yilmaz, M. (2017). Transmuted two-parameter Lindley distribution. {\it Communications in Statistics-Theory and Methods}, {\bf 46(23)}, 11866-11879.

\bibitem[(2008)]{khah} Khaledi, B.E., and Ahmadi, S.S. (2008). On stochastic comparison between aggregate claim amounts. {\it Journal of Statistical Planning and Inference}, {\bf 138(7)}, 3121-3129.

\bibitem[(2017)]{khan} Khan, M.S., King, R., and Hudson, I.L. (2017). Transmuted Weibull distribution: Properties and estimation.  {\it Communications in Statistics-Theory and Methods}, {\bf 46(11)}, 5394-5418.


%
%
%
\bibitem[(2016)]{lili} Li, C. and Li, X. (2016). Sufficient conditions for ordering aggregate heterogeneous random claim amounts. {\it Insurance: Mathematics and Economics}, {\bf 70}, 406-413.


\bibitem[(2013)]{lll} Li, H. and Li, X. (2013). {\it Stochastic Orders in Reliability and Risk}. Springer, New York.
%

\bibitem[(2000)]{ma} Ma, C. (2000). Convex orders for linear combinations of random variables. {\it Journal of Statistical Planning and Inference}, {\bf 84}, 11-25.

\bibitem[(2011)]{met} Marshall, A.W., Olkin, I. and Arnold, B.C. (2011). {\it Inequalities: Theory of Majorization and its Applications}. Springer, New York.


\bibitem[(2008)]{mir2} Mirhossaini, S.M. and Dolati, A. (2008). On a New Generalization of the Exponential Distribution. {\it Journal of Mathematical Extension}, {\bf 3(1)}, 27-42.

\bibitem[(2011)]{mir1} Mirhossaini, S.M., Dolati, A., and Amini, M. (2011). On a class of distributions generated by stochastic mixture of the extreme order statistics of a sample of size two. {\it Journal of Statistical Theory and Application}, {\bf 10}, 455-468.
%




\bibitem[(2002)]{must} M\"{u}ller, A., and Stoyan, D. (2002). {\it Comparison methods for stochastic models and risks}. John Wiley \& Sons, New York.

\bibitem[(2016)]{ok} Okorie, I. E., Akpanta, A. C., and Ohakwe, J. (2016). Transmuted Erlang-truncated exponential distribution.  {\it Economic Quality Control}, {\bf 31(2)}, 71-84.

\bibitem[(2016)]{sab} Saboor, A., Elbatal, I., and Cordeiro, G. M. (2016). The transmuted exponentiated Weibull geometric distribution: Theory and applications.  {\it Hacettepe Journal of Mathematics and Statistics}, {\bf 45}, 973-987.


\bibitem[(2007)]{ss} Shaked, M. and Shanthikumar, J.G. (2007). {\it Stochastic Orders}. Springer, New York.



\bibitem[(2009)]{shbu} Shaw, W.T. and Buckley, I.R. (2009). The alchemy of probability distributions: beyond Gram-Charlier expansions, and a skew-kurtotic-normal distribution from a rank transmutation map. {\it arXiv preprint arXiv:0901.0434}.

\bibitem[(2014)]{tian} Tian, Y., Tian, M., and Zhu, Q. (2014). Transmuted linear exponential distribution: A new generalization of the linear exponential distribution. {\it Communications in Statistics-Simulation and Computation}, {\bf 43(10)}, 2661-2677.

\bibitem[(2015)]{zz} Zhang, Y. and Zhao, P. (2015). Comparisons on aggregate risks from two sets of heterogeneous portfolios. {\it Insurance: Mathematics and Economics}, {\bf 65}, 124-135.

\end{thebibliography}
\end{document}